\documentclass[12pt]{iopart}
\usepackage{graphicx}
\usepackage{dcolumn}
\usepackage{bm}

\usepackage{iopams}

\expandafter\let\csname equation*\endcsname\relax

\expandafter\let\csname endequation*\endcsname\relax

\usepackage{amsmath}
\usepackage{amssymb}
\usepackage{amsthm}
\usepackage[utf8]{inputenc}
\usepackage[english]{babel}
\usepackage{xcolor}

\newtheorem{theorem}{Theorem}
\newtheorem{lemma}{Lemma}
\newtheorem{definition}{Definition}
\newtheorem{corollary}[theorem]{Corollary}

\begin{document}

\title{Quantifying coherence of quantum measurements}

\author{Kyunghyun Baek$^1,^3$,, Adel Sohbi$^1$,
Jaehak Lee$^1$, Jaewan Kim$^1,^3$ and Hyunchul Nha$^2,^3$}
\address{$^1$ School of Computational Sciences, Korea Institute for Advanced Study, Seoul, 02455, Republic of Korea}
\address{$^2$ Department of Physics, Texas A$\&$M University at Qatar, Education City, P.O. Box 23874, Doha, Qatar}
\address{$^3$ Author to whom any correspondence should be addressed.}

\eads{\mailto{kbaek@kias.re.kr}, \mailto{ jaewan@kias.re.kr} , \mailto{ hyunchul.nha@qatar.tamu.edu} }

\begin{abstract}
In this work we investigate how to quantify the coherence of quantum measurements. First, we establish a resource theoretical framework to address the coherence of measurement and show that any statistical distance can be adopted to define a coherence monotone of measurement. For instance, the relative entropy fulfills all the required properties as a proper monotone. We specifically introduce a coherence monotone of measurement in terms of off-diagonal elements of Positive-Operator-Valued Measure (POVM) components. This quantification provides a lower bound on the robustness of measurement-coherence that has an operational meaning as the maximal advantage over all incoherent measurements in state discrimination tasks. Finally, we propose an experimental scheme to assess our quantification of measurement-coherence and demonstrate it by performing an experiment using a single qubit on IBM Q processor.
\end{abstract}

%
%
%
%
%

\section{Introduction}

With the development of quantum technologies, it has been widely perceived that quantum physics can offer enormous advantages in operational tasks. The so-called resource theoretical framework was introduced to systematically investigate which inherent features of quantum physics, e.g., entanglement \cite{Vedral1997,Horodecki2009}, contextuality \cite{Grudka2014} and non-Gaussianity \cite{Park2019,Zhuang2019,Lami2019,Albarelli2019,Takagi2019}, allow such advantages (see Ref. \cite{Chitambar2019} for more details). Particularly, quantum coherence is considered as one of the key ingredients that can offer quantum advantages in various forms. Its resource theoretical framework was initially developed for the quantification of coherence in quantum states \cite{Baumgratz2014}. This led to extensive investigations of its role in computation \cite{Hillery2016,Shi2017}, thermodynamics \cite{Korzekwa2016,Santos2019} and metrology \cite{Marvian2016,Giorda2017}. Furthermore, it was studied how closely quantum coherence is related to other fundamental notions such as entanglement \cite{Streltsov2015}, correlations \cite{Ma2016} and nonclassicality \cite{Tan2017} (see Ref. \cite{Streltsov2017} for more details).
These studies naturally stimulated interest on what advantages the coherence in quantum states may provide \cite{Yadin2016}. More recently, resource theoretical framework was established also for the quantification of coherent operations \cite{Theurer2019}.

Together with coherence in quantum states and its manipulation, coherence in quantum measurement must also be considered as a resource since the quantum nature of measurement is required to access the coherence of quantum states in experiment. 
 If a measurement cannot address coherence of quantum states, its resulting measurement statistics does not provide any information on coherence under investigation \cite{Theurer2019}.
Furthermore, it is essential to understand the characteristics of a measuring device to analyze measurement in various scenarios such as the state-independent contextuality \cite{Cabello2008,Badziag2009,Yu2012}, quantum measurement engine \cite{Elouard2017, Elouard2017-2, Elouard2018} and measurement-based quantum computation \cite{Raussendorf2001,Raussendorf2003,Briegel2009,Briegel2009-2}. 
It is thus of crucial fundamental importance to study a rigorous quantification of coherence in quantum measurement and its possible experimental characterization. 
In a related context, the resource theoretical approach for coherent operations was employed in \cite{Xu2019}. It considered a quantum measurement as an operation that maps a quantum state to a statistical distribution according to the Gleason's theorem \cite{Gleason1957}, which was also demonstrated experimentally via detector tomography.  
On the other hand, the robustness to noise was adopted to quantify coherence of measurement \cite{Oszmaniec2019}, which is in line with the operational characterization of general convex resource theories \cite{Oszmaniec2019,Takagi2019-2,Uola2019}. 
However, these quantifications require convex optimization that can be challenging  for higher dimensions. From an experimental point of view, an operational assessment of coherence of measurement was also provided in \cite{Cimini2019} apart from the resource theoretical framework.


In this paper, we establish a resource theoretical framework for the quantification of coherence in quantum measurements and introduce proper monotones by using statistical distance measures. In addition, we introduce a readily computable coherence monotone of measurement which takes into account the off-diagonal elements of each POVM component. We show that this monotone gives a lower bound on the general robustness that is closely related to the maximal advantage over all incoherent measurements in the state discrimination task \cite{Oszmaniec2019}. 
Finally, to experimentally assess the coherence of quantum measurement, we apply quantum process tomography in \cite{Isaac1997} to detector tomography by which we can straightforwardly obtain the off-diagonal elements from its statistical distributions. The key benefit of this approach may be its simplicity and directness as it avoids a computationally demanding post-processing that becomes challenging for high dimensional systems.

\section{ Preliminaries}

Coherence in a quantum state essentially refers to the quantum superposition principle. Whether a quantum state has a superposition nature or not depends on the choice of the basis states to represent the given state. 
Here we confine ourselves to the cases where an orthonormal basis $\{|i\rangle\}_{i=0}^{d-1}$ is specified in $d$-dimensional Hilbert space $\mathcal H_d$ such as computational basis and energy basis. We call this {\it an incoherent basis}.

In the resource theory of coherence, we say that a state is free, i.e., {\it incoherent}, if it corresponds to a statistical mixture of the incoherent basis states $\{|i\rangle\}_{i=0}^{d-1}$. Hence, a state $\rho$ is incoherent if and only if it has the form of
\begin{align}
  \rho=\sum_{i =0}^{d-1} p_i |i\rangle\langle i|.
\end{align}
We define the total dephasing operation $\Phi$, which completely destroys coherence of states,
\begin{align}
  \Phi(\rho)=\sum_{i=0}^{d-1} \langle i|\rho|i\rangle |i\rangle\langle i|.
\end{align}

\subsection {Classification of quantum measurements}

A  quantum measurement on $\mathcal H_d$ is generally described by a Positive-Operator-Valued Measure (POVM), which is a set of positive operators, i.e., $\mathbf A=\{A_a\}_{a=0}^{n-1}$, satisfying the completeness relation $\sum_{a=0}^{n-1} A_a=I_d$ with $n$ the number of measurement outcomes. 
All POVMs on $\mathcal H_d$ with $n$ outcomes  form the convex set $\mathcal M(d,n)$ as we define the convex combination of  $\mathbf A,\mathbf B \in \mathcal M(d,n)$ by $p \mathbf A+(1-p)\mathbf B=\{pA_a+(1-p)B_a\}_{a=0}^{n-1}$ with $0\leq p\leq 1.$
We say that a POVM is free if a probability distribution of measurement outcomes is independent of coherence of quantum states. This is defined formally as follows  \cite{Theurer2018}.
	A POVM $\mathbf A\in \mathcal M(d,n)$ given by $\{A_a\}_{a=0}^{n-1}$ is free, i.e., an {\it incoherent measurement} (IM) if and only if 
\begin{align}
\Tr[A_a \Phi(\rho)]=\Tr[A_a\rho]
\end{align}
for all states $\rho$ and all $a\in \{0,...,n-1\}$. Equivalently, it is free if and only if all POVM components are written in an incoherent form
\begin{align}\label{IncM}
A_a=\sum_{i=0}^{d-1} \alpha_{i|a} |i\rangle\langle i|
\end{align}
with $\alpha_{i|a}=\langle i|A_a|i\rangle$.

\subsection {Classification of quantum operations}

We can generally deal with a quantum operation  $\mathcal E$ in the framework of completely positive trace preserving maps defined by a set of Kraus operators $\{K_\mu\}$. Its action on a state $\rho$ is expressed as $$\mathcal E(\rho)=\sum_{\mu=0}^{n_{\mathcal E}-1} K_\mu \rho K_\mu^\dagger,$$
where $\sum_\mu K_\mu^\dagger K_\mu = I$ with $n_{\mathcal E}$ the total number of Kraus operators.
$\mathcal E$ is a {\it maximally incoherent operation} (MIO) if it maps the set of incoherent states $\mathcal{I}$ to its subset, i.e., $\mathcal E(\mathcal I) \subset \mathcal I$.
However, MIO does not necessarily imply that an output state associated with a selective measurement outcome is incoherent. 
Namely, there is a case $\rho\in \mathcal I \rightarrow \mathcal E(\rho)=\sum_{\mu} K_\mu \rho K_\mu^\dagger\ \in \mathcal I$, but $K_\mu \rho K_\mu^\dagger/p_\mu \notin \mathcal I$  for a specific $\mu$ with $p_\mu=\text{Tr}[K_\mu \rho K_\mu^\dagger]$.
If all $K_\mu$ associated with $\mathcal E$ map incoherent states to incoherent states, then it is called {\it an incoherent operation} (IO) \cite{Baumgratz2014}, i.e., $K_\mu \rho K_\mu^\dagger /p_\mu \in \mathcal I$ for all $\mu$.

A quantum operation with $\{K_\mu\}$ acting on a quantum state $\rho$ can be equivalently addressed as acting on quantum measurements $\mathbf A$ as
$$\text{Tr}[K_\mu\rho K^\dagger_\mu A_a]=\text{Tr}[\rho K^\dagger_\mu A_a K_\mu].$$
In view of coherence in states, we defined IOs mapping incoherent states to themselves. However, an IO can generate coherent measurements from incoherent ones. For instance, let us consider an IO described by the following Kraus operators
$$K_0=|0\rangle\langle +| \text{ and }K_1=|1\rangle\langle -|,$$
where $|\pm\rangle=(|0\rangle\pm|1\rangle)/\sqrt{2}$ in the incoherent basis $\{|0\rangle,|1\rangle\}$. Conditional states associated with outcomes 0 and 1 are $|0\rangle$ and $|1\rangle$ with probabilities $p_0=\langle +|\rho|+\rangle$ and $p_1=\langle -|\rho|-\rangle$, respectively. However, we obtain a coherent measurement by applying its dual operation to an incoherent measurement $\{A_a\}$ such that
\begin{align*}
K_0^\dagger A_{a} K_0&=\langle 0|A_{a}|0\rangle |+\rangle\langle +|,\\
K_1^\dagger A_{a} K_1&=\langle 1|A_{a}|1\rangle |-\rangle\langle -|.
\end{align*}

Therefore we need a stricter definition of IO that cannot generate coherence both from incoherent states and from incoherent measurements. In this context, both $K_\mu$ and $K_\mu^\dagger$ have to be incoherent.
This additional condition elevates IO to {\it strictly incoherent operation} (SIO) \cite{Yadin2016,Winter2016}, which is a set of operations that can neither create nor detect coherence in an input state. 
One can concisely express a Kraus operator of SIO as \cite{Yadin2016,Chitambar2016b}
\begin{align}\label{SIO}
  K_\mu=\sum_{i=0}^{d-1} c_{\mu,i} |\pi_\mu(i)\rangle\langle i|=V_\mu \tilde{K}_\mu,
\end{align}
where $V_\mu$ is a unitary operation corresponding to  a permutation $\pi_\mu$ and $\tilde K_\mu=\sum_{i=0}^{d-1} c_{\mu,i} |i\rangle\langle i|$ is a genuinely incoherent Kraus operator  \cite{Yadin2016}. The completeness relation $\sum_\mu K_\mu^\dagger K_\mu=I_d$ implies $\sum_\mu |c_{\mu,i}|^2=1$ for each $i$. Thus, one can rewrite it as $c_{\mu,i}=e^{i \theta_{\mu,i}}\sqrt{p_{\mu|i}}$ in terms of phase and conditional probability of $\mu$ given $i$. An operational interpretation of SIO was found in \cite{Yadin2016} in terms of interferometry. We remark that Ref. \cite{Theurer2019,Xu2019} studied coherence of quantum operations, considering a general set of free operations defined by detection incoherent operations.

\section {Quantification of coherence in quantum measurements}

Based on the axiomatic quantification of coherence of states \cite{Baumgratz2014}, we introduce a list of properties that a legitimate coherence monotone $\mathcal C$ of a measurement $\mathbf A\in \mathcal M(d,n)$ must fulfill as follows.
\begin{itemize}
  \item[($\mathcal C$1)] Faithfulness: $\mathcal C(\mathbf A)\geq 0$ for all POVMs with equality if and only if $A$ is incoherent.
  \item[($\mathcal C$2)] Monotonicity: $\mathcal C$ does not increase under the dual operation of any nonselective SIO $\mathcal{E}$, i.e., $\mathcal{C}(\mathcal E^* (\mathbf A))\leq\mathcal{C}(\mathbf A)$, where $\mathcal E^*(\mathbf A)$ denotes a measurement $\{\sum_\mu K^\dagger_\mu A_a K_\mu\}_a$ given after applying the nonselective dual SIO to $\mathbf A$.
  \item[($\mathcal C$3)] Strong monotonicity: $\mathcal C$ does not increase under the dual operation of any selective SIO $\{K_\mu\}$, i.e., $\mathcal{C}(\mathbf A')\leq\mathcal{C}(\mathbf A)$, where the expanded POVM $\mathbf A'=\{K^\dagger_\mu A_a K_\mu\}_{a,\mu}$ is given after applying the SIO to $\mathbf A$ (See Appendix \ref{Appdx1} for more detailed discussion on $\mathbf A'$).
  \item[($\mathcal C$4)] Convexity: For $n$-outcome POVMs given by $\mathbf A_k=\{A_{a|k}\}_{a=0}^{n-1}$, $\mathcal C$ is a convex function of the measurement, i.e., $\sum_k q_k \mathcal{C}(\mathbf{A}_k)\geq \mathcal{C}(\sum_k q_k \mathbf A_k)$, where $\sum_k q_k \mathbf A_k$ describes a POVM given by $\{\sum_k q_k A_{a|k}\}_{a=0}^{n-1}$. This measurement is constructed by performing the $k$th measurement $\mathbf A_k$ with probability $q_k$ and combining each outcome.
\end{itemize}
Here, ($\mathcal C$2) (($\mathcal C$3)) imposes that a monotone should not increase by action of a (selective) SIO. Also, ($\mathcal C$4) guarantees that the loss of information about choice of measurement cannot increase the average coherence of measurements. Unlike the case of coherence of states, conditions ($\mathcal C$3) and ($\mathcal C$4) are not sufficient for the satisfaction of ($\mathcal C$2).

\subsection{ Statistical distance-based coherence monotones of measurement}

{ We are now ready to introduce coherence monotones of measurement.} To begin with, we consider a statistical distance from which we can define a coherence monotone as follows.
\begin{definition}
	A coherence monotone of measurement $\mathbf A$ is defined as
	\begin{eqnarray}\label{CohMonotone}
	\mathcal C_D (\mathbf A)=\min_{\mathbf M\in \mathcal I(d,n)} \sup_\rho D_\rho(\mathbf A,\mathbf M),
	\end{eqnarray}
	where $D_\rho(\mathbf A,\mathbf M)$ is a statistical distance between probability distributions $p_{\mathbf A}(a)=\Tr[\rho  A_a]$ and $p_{\mathbf M}(a)=\Tr[\rho  M_a]$. Here $\mathbf M$ belongs to the set $\mathcal I(d,n)$ of incoherent measurements defined in $d$ dimension with $n$ outcomes.
\end{definition}
A proper statistical distance yields $D_\rho(\mathbf A,\mathbf M)=0$ if and only if both distributions are identical. This property implies the satisfaction of ($\mathcal C$1) for $\mathcal C_D$. In addition, $\mathcal C_D$ satisfies ($\mathcal C$2) due to its definition, and ($\mathcal C$4) as long as the statistical distance is convex. Detailed mathematical proofs are given in  \ref{Appdx2}.

{
In the state-based approach to quantum coherence, the relative entropy leads to a coherence measure that has  important operational meanings such as coherence distillation \cite{Winter2016}. It is further adopted to address the quantumness of measurement in quantification of state coherence based on POVMs \cite{Bischof2019}. Similarly, we take the relative entropy as a statistical distance as follows. 
}
\begin{definition}
Relative entropy-based coherence monotone of measurement is defined as
\begin{eqnarray}\label{CohMeasureD}
  \mathcal C_S(\mathbf A)=\min_{\mathbf M\in \mathcal I(d,n)} S(\mathbf A\|\mathbf M),
\end{eqnarray}
with the channel divergence \cite{Cooney2016,Leditzky2018,Gour2019} $S(\mathbf A\|\mathbf M)=\sup_\rho H_\rho(\mathbf A\|\mathbf M)$,
where $H_\rho(\mathbf A\|\mathbf M)=\sum_{a=0}^{n-1} p_{\mathbf A}(a)\log [ p_{\mathbf A}(a)/p_{\mathbf M}(a)]$ is the relative entropy between probability distributions $p_{\mathbf A}(a)=\Tr[\rho  A_a]$ and $p_{\mathbf M}(a)=\Tr[\rho M_a]$.
\end{definition}

{ Consequently, this coherence monotone satisfies the strong monotonicity with its proof provided in \ref{Appdx2}. It is worth noting that this monotone coincides with the relative entropy of a dynamical resource introduced in \cite{Gour2019-2}, as we regard a quantum measurement as a channel mapping a quantum state to a statistical distribution. }Similarly, one can take other monotones suggested in \cite{Saxena2019, Theurer2019} to quantify coherence of quantum measurements. 
\newline


\subsection{ Robustness and intuitive quantification of coherence}

On the other hand, one can employ the robustness to quantify the coherence of measurement as the minimal amount of mixing with a measurement that makes the given measurement incoherent. Namely, the robustness is defined as
\begin{align}\label{Rob}
  \mathcal R_C(\mathbf  A)=\min \left\{s \bigg| \frac{\mathbf A+s\mathbf M}{1+s} \in \mathcal I(d,n) \right\},
\end{align}
where the minimization is performed over all measurements $\mathbf M\in \mathcal M(d,n)$. 
Indeed, the robustness was introduced for the mathematical quantification of entanglement in \cite{Vidal1999}.
Recently, in general convex resource theories \cite{Takagi2019-2}, it was shown that the robustness allows an operational interpretation in some discrimination tasks whenever free resources form convex subset of resource objects.  Particularly, in the resource theories of quantum measurements, the robustness indicates the maximum advantage over all free measurements in state discrimination tasks \cite{Oszmaniec2019, Uola2019}.


To evade the convex optimization procedure, {however,} it can be useful to establish a more intuitive quantification of coherence. As  the $l_1$-norm of coherence of states was introduced in \cite{Baumgratz2014}, an intuitive quantification would be accomplished by taking into account off-diagonal elements of a considered POVM. To this end, we introduce a $d$ by $d$ matrix 
\begin{align}\label{Omega}
\Omega(\mathbf A) =  \sum_{i,j=0}^{d-1} \left( \sum_{a=0}^{n-1} |\langle i|A_a|j\rangle|\right) |i\rangle\langle j|
\end{align}
that allows us to look into all off-diagonal elements of a POVM $\mathbf A$, where $|\cdot|$ denotes the absolute value.
Then, its diagonal elements are all unity due to the completeness relation, and it is reduced to the identity matrix $I_d$ if and only if $\mathbf A$ is  an incoherent measurement. From this property, we define a readily computable coherence monotone of measurement as follows.
\begin{definition}
A $l_\infty$ norm-based coherence monotone of measurement is defined  as
\begin{align}\label{CohMeasure}
  \mathcal C_{l_\infty}(\mathbf A)&=\min_{\mathbf M\in \mathcal I(d,n)} \| \Omega(\mathbf A)- \Omega(\mathbf M)\|_\infty\\
  &= \max_{i< j} \sum_{a=0}^{n-1} |\langle i |A_a |j \rangle|.\nonumber
\end{align}
where the $l_\infty$ matrix norm gives the largest absolute value among each element of a matrix, i.e., $\|\cdot\|_\infty=\max_{i,j}|\langle i|\cdot|j\rangle|$.
\end{definition}
\noindent As desired, this coherence monotone fulfills all requirements ($\mathcal C$1-4) as proved in Appendix \ref{Appdx3}. 

One may wonder if it can be possible to quantify coherence of measurement by employing the $l_1$ matrix norm
such as 
$$\mathcal C_{l_1}(\mathbf A)=\min_{\mathbf M\in \mathcal I(d,n)} \| \Omega(\mathbf A)- \Omega(\mathbf M)\|_1=  \sum_{a=0}^{n-1}\sum_{i\neq j} |\langle i |A_a|j\rangle|.$$ 
However, it is not difficult to find counter examples showing that $\mathcal C_{l_1}(\mathbf A)$ increases after applying the dual of a SIO (see  \ref{Appdx4}). Nevertheless, it is worth considering $\mathcal C_{l_1}$ together with $\mathcal C_{l_\infty}$ when we estimate the robustness from the following relations.
\begin{theorem}
For any POVM $\mathbf A \in \mathcal M(d,n)$, the following holds
\begin{align}\label{Relation}
\mathcal C_{l_\infty}(\mathbf  A)\leq \mathcal R_C(\mathbf A)\leq \frac{1}{2} \mathcal C_{l_1}(\mathbf A).
\end{align}
\end{theorem}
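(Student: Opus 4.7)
The plan is to prove the two inequalities separately. For the lower bound $\mathcal{C}_{l_\infty}(\mathbf{A}) \leq \mathcal{R}_C(\mathbf{A})$, I would unpack the definition of robustness: if $\mathcal{R}_C(\mathbf{A}) = s$, there exists a POVM $\mathbf{M}$ such that $(A_a + s M_a)/(1+s)$ is incoherent for every $a$. Extracting off-diagonal parts yields $\langle i|A_a|j\rangle = -s\,\langle i|M_a|j\rangle$ for all $a$ and all $i\neq j$. Positivity of each $M_a$ gives $|\langle i|M_a|j\rangle| \leq \sqrt{\langle i|M_a|i\rangle\,\langle j|M_a|j\rangle}$, and a second application of Cauchy--Schwarz to the sum over $a$ combined with completeness $\sum_a M_a = I_d$ yields $\sum_a |\langle i|M_a|j\rangle| \leq 1$. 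Stringing these together gives $\sum_a |\langle i|A_a|j\rangle| \leq s$ for every pair $i<j$, hence $\mathcal{C}_{l_\infty}(\mathbf{A})\leq s$.

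For the upper bound $\mathcal{R}_C(\mathbf{A}) \leq \tfrac{1}{2}\mathcal{C}_{l_1}(\mathbf{A})$, the approach is constructive: I would exhibit a POVM $\mathbf{M}$ attaining $s = t := \tfrac{1}{2}\mathcal{C}_{l_1}(\mathbf{A})$. Writing $\phi_{ij}^{(a)}$ for the phase of $\langle i|A_a|j\rangle$, I would introduce, for $i<j$, the rank-one PSD operators
\[
P_{ij}^{(a)} = |\langle i|A_a|j\rangle|\,\bigl(|i\rangle - e^{-i\phi_{ij}^{(a)}}|j\rangle\bigr)\bigl(\langle i| - e^{i\phi_{ij}^{(a)}}\langle j|\bigr),
\]
and set $C_a = \sum_{i<j} P_{ij}^{(a)}$. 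A direct calculation shows that the off-diagonal part of $C_a$ equals $-(A_a - \Phi(A_a))$, so $A_a + C_a$ is diagonal, and being a sum of PSD pieces it is positive. Summing over $a$ and using $\sum_a A_a = I_d$, off-diagonals cancel, and an elementary bookkeeping argument bounds the diagonal entries $r_i = \sum_a \sum_{j\neq i}|\langle i|A_a|j\rangle|$ of $\sum_a C_a$ by $t$, so $tI_d - \sum_{a'} C_{a'} \succeq 0$. I would then define
\[
M_a = \frac{1}{t}\left(C_a + \frac{1}{n}\Bigl(tI_d - \sum_{a'} C_{a'}\Bigr)\right),
\]
verify that $\mathbf{M}$ is a POVM (positivity of each term is clear; normalization $\sum_a M_a = I_d$ is by design), and check that $A_a + tM_a$ is diagonal and PSD, so $(A_a + tM_a)/(1+t)$ is an incoherent POVM witnessing $\mathcal{R}_C(\mathbf{A}) \leq t$.

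The main obstacle is the upper bound, specifically ensuring the off-diagonal cancellation is realized by a bona fide POVM---both $M_a \succeq 0$ for each $a$ and $\sum_a M_a = I_d$ simultaneously. The crucial observation that makes the construction go through is that the diagonal overhead needed to keep each $P_{ij}^{(a)}$ positive, when aggregated over all outcomes, stays below $tI_d$, leaving precisely enough slack to pad up to a POVM by a uniform diagonal correction. The lower bound, by contrast, reduces to a double application of Cauchy--Schwarz and should be routine.
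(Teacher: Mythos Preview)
Your proof is correct, and it follows a genuinely different route from the paper's.

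The paper proves both inequalities via the \emph{dual} SDP for the robustness,
\[
\mathcal{R}_C(\mathbf A)=\max\Bigl\{\sum_a\Tr[Z_aA_a]-1 : Z_a\ge 0,\ \langle i|Z_a|i\rangle=\langle i|Z_n|i\rangle\ \forall i,a,\ \Tr Z_n=1\Bigr\},
\]
noting that the constraints force each $Z_a$ to be a density operator with a common diagonal. The objective then reduces to $\sum_{i\ne j}\sum_a\langle i|Z_a|j\rangle\langle j|A_a|i\rangle$. For the lower bound the paper \emph{chooses} specific rank-one $Z_a=|k,l,\theta_a\rangle\langle k,l,\theta_a|$ with phases tuned so that the objective equals $\mathcal C_{l_\infty}(\mathbf A)$. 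For the upper bound it uses Lemma~1 ($|\langle i|\rho|j\rangle|\le\tfrac12$ for any state $\rho$) applied to each $Z_a$ to bound the objective by $\tfrac12\mathcal C_{l_1}(\mathbf A)$.

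You instead work entirely on the \emph{primal} side. For the lower bound you read off $\langle i|A_a|j\rangle=-s\langle i|M_a|j\rangle$ from incoherence of the mixture, then control $\sum_a|\langle i|M_a|j\rangle|$ by a double Cauchy--Schwarz using positivity and completeness of $\mathbf M$; this avoids any appeal to SDP duality and is more elementary. For the upper bound you give an explicit feasible $\mathbf M$ built from the rank-one corrections $P_{ij}^{(a)}$; the key bookkeeping step $r_i\le t$ holds because $r_i$ sums $|\langle i|A_a|j\rangle|$ over pairs $\{i,j\}$ containing $i$, while $t$ sums the same quantities over \emph{all} pairs. Your construction is self-contained and yields, as a byproduct, an explicit optimal (or near-optimal) mixing POVM, which the dual approach does not furnish. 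Conversely, the paper's dual argument is more unified---both bounds drop out of the same variational expression---and identifies the states $|k,l,\theta_a\rangle$ that certify the lower bound, linking directly to the operational state-discrimination picture. One minor caveat: your upper-bound construction divides by $t$, so the case $t=0$ (i.e., $\mathbf A$ already incoherent) should be handled separately, but that is trivial.
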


The proof is provided in  \ref{Prf of Thm1}. The relation allows us to bound the robustness measure from below based on the absolute value of the off-diagonal elements. Particularly, for 2-dimensional cases, they become equality as follows.

\begin{corollary}
For any POVM $\mathbf A \in \mathcal M(2,n)$, it holds that
\begin{align}\label{Relation}
\mathcal C_{l_\infty}(\mathbf  A)= \mathcal R_C(\mathbf A)=\frac{1}{2}\mathcal C_{l_1}(\mathbf A).
\end{align}
\end{corollary}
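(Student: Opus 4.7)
The plan is to exploit the sandwich already proved in Theorem 1: since $\mathcal{C}_{l_\infty}(\mathbf{A}) \leq \mathcal{R}_C(\mathbf{A}) \leq \tfrac{1}{2}\mathcal{C}_{l_1}(\mathbf{A})$ holds in every dimension, the three quantities will necessarily agree once I establish the single identity $\mathcal{C}_{l_\infty}(\mathbf{A}) = \tfrac{1}{2}\mathcal{C}_{l_1}(\mathbf{A})$ in the qubit case. So the whole task reduces to proving this one equality for $d=2$.

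To prove that identity I would rely on two elementary observations. First, when $d=2$ the index set $\{(i,j) : i<j\}$ contains only the pair $(0,1)$, so the maximum appearing in $\mathcal{C}_{l_\infty}(\mathbf{A}) = \max_{i<j}\sum_{a}|\langle i|A_a|j\rangle|$ is vacuous and the monotone collapses to $\sum_{a=0}^{n-1}|\langle 0|A_a|1\rangle|$. Second, Hermiticity of each POVM element, $A_a = A_a^\dagger$, yields $\langle 1|A_a|0\rangle = \overline{\langle 0|A_a|1\rangle}$ and therefore $|\langle 1|A_a|0\rangle| = |\langle 0|A_a|1\rangle|$, so the $l_1$ monotone is exactly twice the $l_\infty$ one:
\[
\mathcal{C}_{l_1}(\mathbf{A}) = \sum_{a=0}^{n-1}\sum_{i\neq j}|\langle i|A_a|j\rangle| = 2\sum_{a=0}^{n-1}|\langle 0|A_a|1\rangle| = 2\,\mathcal{C}_{l_\infty}(\mathbf{A}).
\]

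Substituting this identity back into the chain of Theorem 1 immediately forces $\mathcal{C}_{l_\infty}(\mathbf{A}) = \mathcal{R}_C(\mathbf{A}) = \tfrac{1}{2}\mathcal{C}_{l_1}(\mathbf{A})$, which is the desired conclusion. There is essentially no obstacle here: the corollary is a dimension-counting phenomenon, reflecting that in the qubit case every POVM carries only a single conjugate pair of off-diagonal entries, so the $l_\infty$ maximum and the $l_1$ sum differ only by the trivial factor of two and the robustness is squeezed between them.
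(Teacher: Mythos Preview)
Your argument is correct and matches the paper's own reasoning: in $d=2$ there is a single off-diagonal pair, so $\mathcal{C}_{l_\infty}$ and $\tfrac{1}{2}\mathcal{C}_{l_1}$ coincide by Hermiticity of the $A_a$, and Theorem~1 then squeezes $\mathcal{R}_C$ between them. The paper states exactly this (noting the factor $1/2$ arises because $\mathcal{C}_{l_1}$ counts both $i<j$ and $i>j$), so there is nothing to add.
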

\noindent It is because the lower and the upper bound become identical by the definition of $\mathcal C_{l_\infty}$ and { $\mathcal C_{l_1}/2$. Here, the factor 1/2 appears because off-diagonal terms are counted twice for $i> j$ and $i< j$ in $\mathcal C_{l_1}$, while they are counted only once for $i<j$ in $\mathcal C_{l_\infty}$}.

For higher dimensions $d>2$, {  their relationships are more involved.
To see their behaviors, we consider a dichotomic POVM $\mathbf G$ in a 3-dimensional system as an example, i.e., $\mathbf G \in \mathcal M(3,2)$. We further assume the amplitude damping channel is applied to $\mathbf G$, which is known as a SIO (see \ref{Ex} for more details). As a result, figure \ref{Comparing} shows that the equalities in \eqref{Relation} do not hold. It is because $\mathcal C_{l_\infty}$ takes only the maximal off-diagonal elements into account, while others take all off-diagonal elements. We expect that the difference among them usually become more substantial for higher dimensions.  Nevertheless, there can be POVMs satisfying $\mathcal C_{l_\infty}(\mathbf A)=\mathcal C_{l_1}(\mathbf A)/2$ for higher dimensions, if off-diagonal elements of $\Omega(\mathbf A)$ corresponding to specific incoherent states $|i\rangle$ and $|j\rangle$ are non-zero while other terms vanish.
Additionally, we remark that $\mathcal R_{\mathcal C}$ and $\mathcal C_{l_\infty}$ monotonically decrease with the damping rate, while the upper bound $\mathcal C_{l_1} $ does not. It is because $\mathcal C_{l_1} $ is not a monotone as mentioned above. }  

\begin{figure*}[t]
  \centering
    \includegraphics[width =8cm]{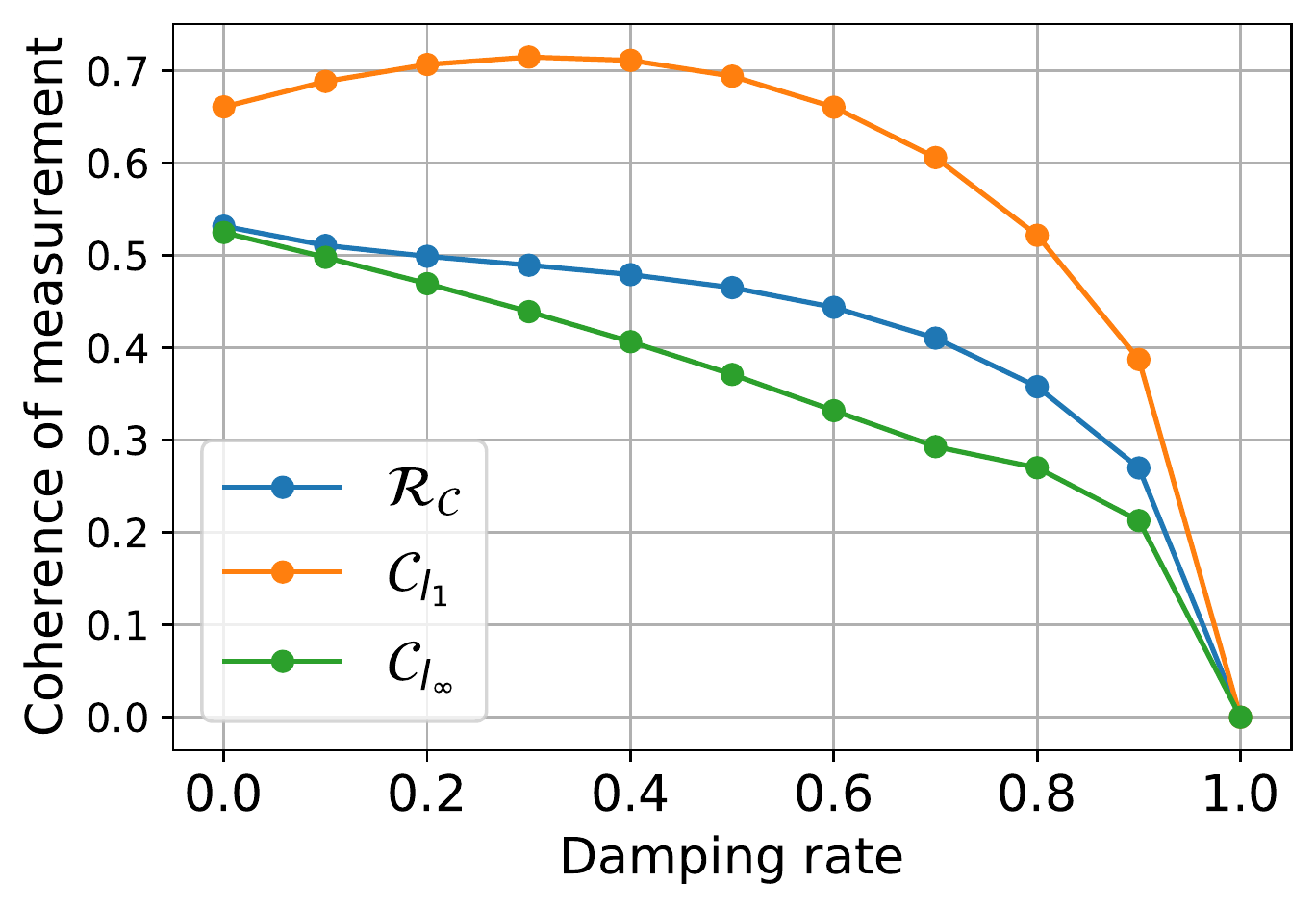}
    \caption{$\mathcal R_{\mathcal C}$, $\mathcal C_{l_1}$ and $\mathcal C_{l_\infty}$ for a dichotomic POVM $\mathbf G$  in a 3-dimensional system $\mathbf G \in \mathcal M(3,2)$ to which the amplitude damping channel is applied. Detailed POVM components are given in \ref{Ex}.
}\label{Comparing}
\end{figure*}



\section{Experimental scheme to assess coherence of measurement}

General procedures for detector tomography consist of two steps: {\it (i) data collection} as preparing a set of probe states and measuring each of them by an unknown measurement and {\it (ii) data analysis}, the so-called global reconstruction as finding the optimal physical POVM consistent with the data \cite{Lundeen2009}.  However, the global reconstruction can be challenging for high dimensional systems. We instead employ a procedure introduced in \cite{Isaac1997} for quantum process tomography. For an unknown measurement $\mathbf A=\{A_a\}_{a=0}^{n-1}$ on $\mathcal H_d$, a set of $d^2$ linearly independent probe states $\{\rho_k\}_{k=1}^{d^2}$ is required to explicitly obtain $\mathbf A$. We denote by $p(a|k)$ the probability of obtaining outcome $a$ associated with the POVM $\mathbf A$ for a probe state $\rho_k$.
Without loss of generality, we write them in an incoherent basis as $A_a=\sum_{i,j=0}^{d-1} \alpha_{ij|a}|i\rangle\langle j|$ and $\rho_k=\sum_{i,j=0}^{d-1}\beta_{ij|k} |i\rangle\langle j|$, where $\alpha_{ij|a}=\langle i|A_a|j\rangle$ and $\beta_{ij|k}= \langle i |\rho_k|j\rangle$, respectively. 
Each probability is written in terms of $\alpha_{ij|a}$ and $\beta_{ij|k}$ as $p(a|k)=\Tr[\rho_k A_a]=\sum_{i,j=0}^{d-1}\beta_{ij|k}\alpha_{ji|a}$ for all $a$ and $k$.

We introduce $d^2$-dimensional real column vectors $\bm{\gamma}_k$ and $\bm{\chi}_a$ that contain information on $\rho_k$ and $A_a$, respectively. That is, we can rewrite $p(a|k)$ as 
\begin{align}
p(a|k)={\bm{\gamma}}_k^T \bm{\chi}_a,
\end{align}
where $\bm{\chi}_a$ and $\bm\gamma_k$ are defined as
\begin{align*}
(\bm\chi_a)_{x}= 
\begin{cases}
\alpha_{qq|a} & \text{for } q = r \\
\sqrt{2}\text{Re}[\alpha_{qr|a}] &\text{for } q < r\\
\sqrt{2}\text{Im}[\alpha_{rq|a}]  &\text{for } q > r
\end{cases},
\\
(\bm\gamma_k)_{x}= 
\begin{cases}
\beta_{qq|k} & \text{for } q = r \\
\sqrt{2}\text{Re}[\beta_{qr|k}] &\text{for } q < r\\
\sqrt{2}\text{Im}[\beta_{rq|k}]  &\text{for } q > r
\end{cases}
\end{align*}
with integers $0 \leq x \leq d^2-1$ and $0\leq q,r\leq d-1$ holding $x=qd+r$. For instance, we have $\bm \gamma_k = (\beta_{00|k}, \sqrt{2}\text{Re}[\beta_{01|k}],\sqrt{2}\text{Im}[\beta_{01|k}],\beta_{11|k})^T$ for $d=2$. Then, as we define $d^2 \times d^2$ square matrix as $\bm\Gamma=(\bm\gamma_1,\bm\gamma_2,...,\bm\gamma_{d^2})$ and a probability vector for $a$ as $\bm\mu_a=(p(a|1),...,p(a|d^2))^T$, all measured statistics related to an outcome $a$ is written in terms of these vectors as
\begin{align}\label{DetectorT}
\bm\Gamma^T\bm \chi_a=\bm\mu_a.
\end{align}
Linear independence of $\rho_k$ implies that the inverse of $\bm\Gamma$ exists. Therefore, by applying its inverse $(\bm\Gamma^T)^{-1}$ on both sides, one can obtain the unknown vector $\bm\chi_a$ that explicitly determines the POVM component $A_a$.

\begin{figure*}[t]
  \centering
    \includegraphics[width =15cm]{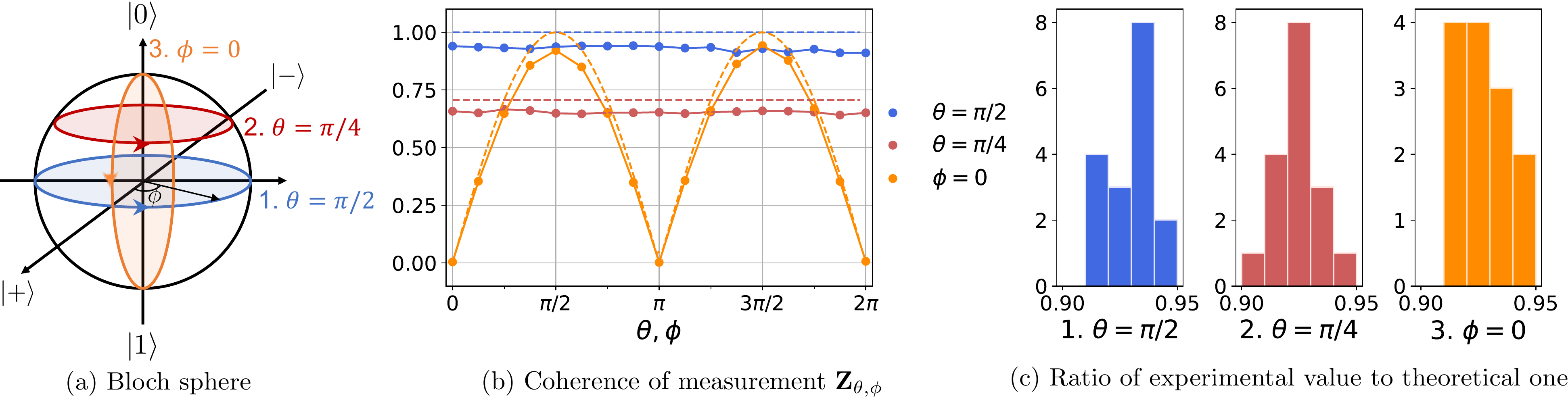}
    
    \caption{Demonstration for single qubit measurements on IBM Quantum systems. (a) Each point in the Bloch sphere represents the direction of measurement $\mathbf Z_{\theta,\phi}$ varying along the different paths. (b) Coherence of measurements $\mathbf Z_{\theta,\phi}$ along paths 1 (blue), 2 (red) and 3 (orange). Dashed curves show theoretical expectation values.  (c) Histograms of the ratio of an experimental value to the theoretical one. 
}\label{Figure1}
\end{figure*}

For our goal to experimentally assess the coherence of measurements, we consider the following family of linearly independent states,
\begin{align}\label{States}
|\psi_{kl}\rangle= 
\begin{cases}
|k\rangle & \text{for } k=l\\
({|k\rangle+|l\rangle })/{\sqrt2} &\text{for } k>l\\
({|k\rangle+i|l\rangle})/{\sqrt2} &\text{for } k<l
\end{cases},
\end{align}
with $k,l=0,\cdots,d-1$. 
Preparing this family of states and measuring them by an unknown POVM $\mathbf A$, one can directly construct the components of a POVM from the relations
\begin{align*}
p(a|kl)-\frac{p(a|kk)+p(a|ll)}{2}=
\begin{cases}
\text{Re}[\langle k|A_a|l\rangle] & \text{ for } k>l\\
\text{Im}[\langle l|A_a| k \rangle]&\text{ for }  k<l
\end{cases},
\end{align*}
where we denote a probability to obtain an outcome $a$ for a prepared state $|\psi_{kl}\rangle$ by $p(a|kl)=\langle \psi_{kl}|A_a|\psi_{kl}\rangle$.  
The benefit of this method is that one can avoid the reconstruction of whole POVM components that may be challenging for high dimensional systems. In addition, it is possible to obtain its off-diagonal elements, $\langle k |A_a|l\rangle$ straightforwardly, and also to measure it selectively by preparing $|\psi_{kl}\rangle$ for specific values of $k,l$. Here, the quantum superposition of $|k\rangle$ and $|l\rangle$ serves as a resource to assess the amount of coherence in that basis.

We demonstrate our proposed scheme in a single qubit experiment on the 20 qubit IBM Quantum system ``Singapore''. In the IBM Q processor, qubit states are measured in computational basis, i.e., $\mathbf Z=\{|0\rangle,|1\rangle\}$. We manipulate measurement basis by applying the following single-qubit gate before measuring in  $\mathbf Z,$
\[V_{\theta,\phi}=
\begin{pmatrix}
\cos(\theta/2)&& e^{-i\phi}\sin(\theta/2)\\
-\sin(\theta/2) && e^{-i\phi} \cos(\theta/2)
\end{pmatrix},
\]
with the polar angle $\theta \in [0,\pi]$ and the azimuthal angle $\phi \in [0,2\pi]$.
As a result, it is equivalent to execute the measurement $\mathbf Z_{\theta,\phi}=\{V_{\theta,\phi}^\dagger |0\rangle, V_{\theta,\phi}^\dagger |1\rangle\}$. To assess its coherence of measurement, we prepare the family of states $\{|\psi_{kl}\rangle\}_{k,l=0,1}$ as specified in \eqref{States}. In the IBM Q processor, they are prepared by applying appropriate single-qubit gates to the initial state $|0\rangle$ (see \ref{App_Exp} for more details).

We consider three cases $(i)\; \theta=\pi/2$, $(ii) \;\theta=\pi/4$ and $(iii)\;\phi=0$ as illustrated in figure \ref{Figure1}-(a). For the cases $(i)\; \theta=\pi/2$ and $(ii) \;\theta=\pi/4$, we vary $\phi\in [0,2\pi]$ with an interval of $\pi/8$. Similarly, for the case $(iii)\;\phi=0$, we vary $\theta\in [0,2\pi]$ with an interval of $\pi/8$ . For the single qubit measurement $\mathbf Z_{\theta,\phi}$, its coherence of measurement is given by $\mathcal R(\mathbf Z_{\theta,\phi})=\mathcal C_{l_\infty}(\mathbf Z_{\theta,\phi})=\mathcal C_{l_1}(\mathbf Z_{\theta,\phi})/2=|\sin\theta|$ according to \eqref{Relation}.  We plot its theoretical and experimental values in figure \ref{Figure1}-(b). Here, each point denotes the average value of 10 runs with the setting of 8192 shots for each run on the IBM Q processor. For each point, the order of its standard deviation is less than $10^{-3}$, so that its error bar is smaller than the size of a point. 

In figure \ref{Figure1}-(b), the gap between experimental and theoretical values seems to be bigger with a larger coherence of measurement.  In order to see this trend more clearly, we draw histograms for the ratio of each experimental value to its theoretical one except for singular points where theoretical values become zero, i.e., $\theta=0,\pi,2\pi$ for $\phi=0$. The corresponding histograms in figure \ref{Figure1}-(c) show that the ratios are rather uniform in the range of 0.9 $\sim$ 0.95 insensitive to the degree of coherence. 

{ In practice, our experimental results were obtained by executing designed circuits in IBM Q processor and the prepared states would not be pure. However, we may assign the error or noise occurring to these probe states to the POVM under test. That is, we assume a pure probe state for analysis in Eq. (17) (Eq. (14) in the revised version), and importantly, this procedure will only underestimate the coherence of measurement, thereby providing a reliable lower bound for robustness, if the noise occurring is SIO (strictly incoherent operation): When we introduced the properties of our monotone of coherence, we proved that the SIO does not increase the coherence measure. Therefore, it becomes important to understand what kinds of errors actually occur in IBM Q processor.}

{  As addressed in  \cite{Chen2019}, the IBM Q processor has two different types of errors, (i) classical errors from the dcoherence represented by  the shrinkage of arrow in the Bloch-vector representation in figure \ref{Figure1}-(a), and (ii) nonclassical errors represented by the tilt of measurement directions. In the context of coherence, both errors are critical because coherence is not only sensitive to the decoherence but also basis-dependent. However, two types of errors could cause different effects on measurement coherence. For instance, the classical errors lead to uniform decrease of coherence regardless of measurement directions, while the nonclassical ones lead to different behaviors of coherence depending on measurement directions. Therefore, classical errors seem to be dominant in our experimental data, because the ratios of  experimental values to theoretical ones are rather uniform in the range from 0.9 to 0.95 regardless of measurement directions.  }
We thus attribute this trend to { classical} errors occurring in the machine, which could be further confirmed through other related experiments \cite{Chen2019}.

\section{Conclusion}

In summary, we have explored how to quantify coherence of measurement in a resource theoretical framework. A resource theory generally consists of two ingredients, free resources and free operations. In this work, we defined  a free resource as an incoherent measurement \eqref{IncM} by which one cannot give access to coherence of states experimentally. We also defined a free operation as a strictly incoherent operation (SIO) that cannot create a coherent measurement from an incoherent one. We remark that the SIOs are known as a physically well-motivated set of free operations for coherence, as one can neither create nor use coherence via a SIO \cite{Winter2016,Yadin2016}. In this framework, we showed that a statistical distance can be used to define a coherence monotone of measurements  \eqref{CohMonotone}. In particular, the relative entropy gives us a monotone that fulfills all postulates ($\mathcal C$1)-($\mathcal C$4). On the other hand, we introduced the $l_\infty$ norm-based coherence monotone $\mathcal C_{l_\infty}$ that is determined by off-diagonal elements of each POVM component. It can therefore be readily calculated without the convex optimization. Furthermore, we showed that $\mathcal C_{l_\infty}$ gives a lower bound on the robustness measure $\mathcal R_C$ in \eqref{Relation} and they become identical for two-outcome POVMs on $\mathcal H_2$.

To address the coherence of measurement experimentally, we need to have a full description of a given POVM. To this end, we introduced a procedure for the detector tomography by adopting the quantum process tomography \cite{Isaac1997}. In this procedure, $d^2$ linearly independent probe states are prepared and each probe state is measured by an unknown measurement. Then one can obtain its full description from the relation \eqref{DetectorT}. One of advantages from this procedure is that it can be implemented without the global reconstruction that may require a demanding numerical computation. Furthermore, as we prepare a particular set of probe states \eqref{States}, we obtain the off-diagonal elements of each POVM component straightforwardly and selectively. Finally,  we illustrated the feasibility of our approach by performing the single qubit experiment on the 20 qubit IBM Quantum system ''Singapore'' as shown in figure \ref{Figure1}.

We hope that our work could lead to further researches on quantitatively characterizing the role of coherence of measurements in operational tasks such as quantum measurement engine and measurement-based quantum computation. From the fundamental perspective, our approach may provide some insights to exploring other characteristics of quantum measurements such as incompatibility and separable measurements, which will be all subject to future studies.

\ack

KB, AS, JL and JK supported by KIAS Individual Grants
(CG074701, CG070301, CG073101 and CG014604) at Korea Institute for Advanced Study, respectively. We acknowledge the support of Samsung Advanced Institute of Technology for the use of IBM Quantum systems.

\appendix

\section{Strictly incoherent operation \cite{Winter2016,Yadin2016}}\label{Appdx1}

An incoherent Kraus operator is generally written as \cite{Chitambar2016b}
\begin{align*}
K_\mu=\sum_{i=0}^{d-1} c_{\mu,i} |f_\mu(i)\rangle\langle i|
\end{align*}
with coefficients satisfying the completeness relation
\begin{align*}
\sum_{\mu:f_\mu(i)=f_\mu(j)} c_{\mu,j}^* c_{\mu,i}=\delta_{ij}, 
\end{align*}
where $f_\mu$ is a function from $\{0,...,d-1\}$ to $\{0,...,d-1\}$. The conjugate of $K_\mu$ must also be written in this form to be a strictly incoherent operation. It implies $f_\mu$ should be an invertible function, i.e., 
permutation $\pi_\mu$ depending on $\mu$ \cite{Chitambar2016b}. In this case, the Kraus operator can be decomposed as
\begin{align*}
K_\mu=\sum_{i=0}^{d-1} c_{\mu,i} |\pi_\mu(i)\rangle\langle i|= \sum_{i=0}^{d-1} c_{\mu,i} {V}_\mu |i\rangle\langle i|= {V}_\mu\tilde{K}_\mu,
\end{align*}
where $V_\mu$ is an unitary operation corresponding to the permutation $\pi_\mu$ and  $\tilde{K}_\mu=\sum_{i=0}^{d-1} c_{\mu,i} |i\rangle\langle i|$ is a genuinely incoherent Kraus operator \cite{Yadin2016}.  Here, the coefficients satisfy $\sum_{\mu} |c_{\mu,i}|^2=1$ for the completeness relation.

\subsection{Dual of a selective operation}

In the condition ($\mathcal C$3), we have the measurement $\mathbf A'=\{K_\mu^\dagger A_a K_\mu\}$, which is given after applying the dual operation of a selective SIO to $\mathbf A$. To explain it more specifically, let us introduce an ancilla storing classical information about the selection $\mu$. We assume an operation $\mathcal E$ given by $\{K_\mu \otimes U_\mu^{cl}\}_{\mu=0}^{n_{\mathcal E}-1}$ acts on a quantum state $\rho\otimes |0\rangle\langle 0 |^{cl}$, where $K_\mu$ is a Kraus operator and $U_\mu^{cl}$ is an unitary operator shifting the basis $\{|i\rangle^{cl}\}_{i=0}^{n_{\mathcal E}-1}$ by $\mu$, i.e., $U_\mu^{cl}|i\rangle^{cl}=|i+\mu \text{ mod } n_{\mathcal E}\rangle^{cl}$ in $n_{\mathcal E}$-dimensional Hilbert space. By applying $\mathcal E$ to the state, we have
\begin{align*}
\mathcal E(\rho\otimes |0\rangle\langle 0|^{cl})=\sum_{\mu=0}^{n_{\mathcal E}-1} K_\mu \rho K_\mu^{\dagger} \otimes |\mu\rangle\langle \mu|^{cl}.
\end{align*}
Then, we can make a selective operation on the initial measurement by performing a measurement given by $\{A_a\otimes |i\rangle\langle i|^{cl}\}_{a\in \mathcal Z_n,i\in \mathcal Z_{n_{\mathcal E}}}$, where $\mathcal Z_n=\{0,1,...,n-1\}$. As a result, this measurement gives a distribution $p(m,\mu)=\Tr[K_\mu \rho K_\mu^{\dagger} A_a]$. This result is equivalent to what we would obtain by measuring $\rho$ via $\{K_\mu^{\dagger} A_a K_\mu\}_{a\in \mathcal Z_n,\mu\in \mathcal Z_{n_{\mathcal E}}}$. Therefore, we can consider $\mathbf A'$ as a measurement given after applying the dual operation of a selective SIO.

\section{Proofs of conditions for $\mathcal C_D(A)$ and $\mathcal C_S(A)$}\label{Appdx2}

We first prove that $\mathcal C_D(\mathbf A)$ satisfies ($\mathcal C$1) and ($\mathcal C$2) as follows.

\noindent{\it Proof of ($\mathcal C$1) for $\mathcal C_D(\mathbf A)$}. Any statistical distance is supposed to give $D(\mathbf A,\mathbf M)=0$ if and only if two probability distributions associated with POVMs $\mathbf A$ and $\mathbf M$ are identical. Otherwise it should be strictly positive. Thus, in the definition of $\mathcal C_D(\mathbf A)$, $\sup_\rho D(\mathbf A,\mathbf M)=0$ implies that two probability distributions  are identical for all states, i.e., $\Tr[\rho A_a]=\Tr[\rho M_a]$ for arbitrary $a$ and $\rho$. Equivalently, it can be said that $\mathbf A$ and $\mathbf M$ are identical. Thus, if $\mathbf A$ is an incoherent measurement, $\mathcal C_D(\mathbf A)$ vanishes during the minimization over all incoherent measurements. Otherwise, it gives a strict positive value. \hspace*{\fill}\qedsymbol

\noindent{\it Proof of ($\mathcal C$2) for $\mathcal C_D (\mathbf A)$.} If $\mathcal E$ is a SIO, then we have

\begin{align*}
\mathcal C_D (\mathbf A)&= \min_{\mathbf M\in \mathcal I(d,n)} \sup_\rho D_\rho(\mathbf A,\mathbf M)\\
&\geq \min_{\mathbf M\in \mathcal I(d,n)} \sup_{\mathcal{E}(\rho)} D_{\mathcal{E}(\rho)}(\mathbf A,\mathbf M) \\
&=\min_{\mathbf M\in \mathcal I(d,n)} \sup_\rho D_\rho(\mathcal E^*(\mathbf A),\mathcal E^*(\mathbf M))\\
&\geq \min_{\mathbf M\in \mathcal I(d,n)} \sup_\rho D_\rho(\mathcal E^*(\mathbf A),\mathbf M)=\mathcal C_D (\mathcal{E}^*( \mathbf A)).
\end{align*}
Here, the first inequality comes from the fact that $\{\mathcal E(\rho)| \text{ for all }\rho \}$ is a subset of $\{\rho\}$. Similarly, we have the second inequality due to $\{\mathcal E^*(\mathbf M)| \mathbf M\in\mathcal I(d,n) \}\subset \mathcal I(d,n)$. \hspace*{\fill}\qedsymbol

Furthermore, the convexity of a statistical distance $D$ implies the convexity of  $\mathcal C_D$, which is proved as follows.

\noindent{\it Proof of ($\mathcal C$4) for $\mathcal C_D(\mathbf A)$ with the convexity of $D$.} Let us assume the convexity of $D$. That is, for pairs of probability distributions $(p_{\mathbf A_k},p_{\mathbf M_k})$ associated with measurements $\mathbf A_k$ and $\mathbf M_k$, respectively, we have $D(\sum_k q_k \mathbf A_k, \sum_k q_k \mathbf M_k)\leq \sum_k q_k D(\mathbf A_k,\mathbf M_k)$ with the probability $q_k$. By using this property, we have
\begin{align*}
\mathcal C_D (\sum_k q_k \mathbf A_k)&= \min_{\mathbf M_k\in \mathcal I(d,n)} \sup_\rho D_\rho(\sum_k q_k \mathbf A_k,\sum_k q_k \mathbf M_k)\\
&\leq \min_{\mathbf M_k\in \mathcal I(d,n)} \sup_{\rho} \sum_k q_k D_{\rho}(\mathbf A_k,\mathbf M_k) \\
&\leq \sum_k q_k \min_{\mathbf M_k\in \mathcal I(d,n)}  \sup_{\rho} D_{\rho}(\mathbf A_k,\mathbf M_k)  \\
&= \sum_k q_k \mathcal C_D (\mathbf A_k).
\end{align*}
In the first line, one can write the minimization over $\sum_k q_k \mathbf M_k$ without loss of generality. 
Then, the first inequality comes from the convexity of $D$ and the second one is given by taking supremum for each distance. \hspace*{\fill}\qedsymbol

According to the above proofs, $\mathcal C_S$ holds the conditions ($\mathcal C$1-2) and ($\mathcal C$4) due to its positive definiteness and  convexity. Let us prove the satisfaction of ($\mathcal C$3) for $\mathcal C_S$ in the following.

\noindent{\it Proof of ($\mathcal C$3) for $\mathcal C_S (\mathbf A)$.} 
Let a selective SIO $\mathcal E$ be described by a set of Kraus operators $\{K_\mu\}_{\mu=0}^{n_{\mathcal E}-1}$. Then, we have the following relations for $\mathbf A\in \mathcal M(d,n)$.
\begin{align*}
&\min_{\mathbf M\in\mathcal I(d,n)} \sup_\rho S_{\rho}(\mathbf A\|\mathbf M)\\& \geq \min_{\mathbf M\in\mathcal I(d,n)} \sup_\rho\left( \sum_{\mu=0}^{n_\mathcal{E}-1} p_\mu S_{\rho_\mu}(\mathbf A\| \mathbf M) \right) \\
&=\min_{\mathbf M\in\mathcal I(d,n)} \sup_\rho\left( \sum_{\mu=0}^{n_\mathcal{E}-1} p_\mu \sum_{a=0}^{n-1} p_{\mathbf A}(a|\mu)\log \frac{p_{\mathbf A}(a|\mu)}{p_{\mathbf M}(a|\mu)} \right)\\
&=\min_{\mathbf M\in\mathcal I(d,n)} \sup_\rho\left( \sum_{\mu=0}^{n_\mathcal{E}-1} \sum_{a=0}^{n-1} p_\mu p_{\mathbf A}(a|\mu)\log \frac{p_\mu p_{\mathbf A}(a|\mu)}{p_\mu p_{\mathbf M}(a|\mu)} \right)\\
&=\min_{\mathbf M\in\mathcal I(d,n)} \sup_\rho\left( \sum_{\mu=0}^{n_\mathcal{E}-1} \sum_{a=0}^{n-1} \Tr[K_\mu \rho K_\mu^\dagger A_a] \log \frac{\Tr[K_\mu \rho K_\mu^\dagger A_a]}{\Tr[K_\mu \rho K_\mu^\dagger M_a]} \right)\\
&=\min_{\mathbf M\in\mathcal I(d,n)} \sup_\rho\left( \sum_{\mu=0}^{n_\mathcal{E}-1} \sum_{a=0}^{n-1} \Tr[ \rho K_\mu^\dagger A_aK_\mu] \log \frac{\Tr[\rho K_\mu^\dagger A_a K_\mu] }{\Tr[ \rho K_\mu^\dagger M_a K_\mu]} \right)\\
&=\min_{\mathbf M\in\mathcal I(d,n)} \sup_\rho S(\mathbf A'\|\mathbf M')\\
&\geq\min_{\mathbf M\in\mathcal I(d,n\cdot n_{\mathcal E} )} \sup_\rho S(\mathbf A'\|\mathbf M)
\end{align*}
where a conditional state $\rho_\mu=K_\mu \rho K_\mu^\dagger/p_\mu$ is obtained with the probability  $p_\mu = \Tr[K_\mu \rho K_\mu^\dagger]$ after applying $\mathcal E$ to $\rho$. Here, the first inequality comes from $\sup_\rho S_\rho (\mathbf A\| \mathbf M) \geq \sup_\rho S_{\rho_\mu} (\mathbf A\| \mathbf M)$ because $\{\rho_\mu| \forall \rho\} $ is a subset of 
density operators on $\mathcal H_d$. Before the second inequality, we denote the POVMs emerging after applying the dual of the seletive SIO $\mathcal E$ to $\mathbf M\in\mathcal I({d, n})$ by $\mathbf M' =\{K_\mu^\dagger M_a K_\mu\}_{a,\mu}\in\mathcal I({d, n\cdot n_{\mathcal E}})$. Because $\mathbf M'$ defines a subset of incoherent measurements in $\mathcal I({d, n\cdot n_{\mathcal E}})$, the minimization over all incoherent measurement on $\mathcal I(d,n\cdot n_{\mathcal E} )$ gives a smaller value. Note that this proof is applicable to other statistical distances if they satisfy $\sum_{\mu=0}^{n_{\mathcal E}-1} p_\mu D_{\rho_\mu}(\mathbf A\|\mathbf M)=D_{\rho}(\mathbf A'\|\mathbf M')$. \hspace*{\fill}\qedsymbol

\section{Proofs of conditions for $\mathcal C(\mathbf A)$}\label{Appdx3}

We prove that $\mathcal C_{l_\infty}(\mathbf A)$ satisfies requirements ($\mathcal C$1-4) as follows.

\noindent{\it Proof of ($\mathcal C$1) for $\mathcal C_{l_\infty}(\mathbf A)$.} 
If a POVM $\mathbf A \in \mathcal M(d,n)$ is incoherent, then $|\langle i|A_a| j \rangle|=0$ for all $i,j$ and $a$, and vice versa. \hspace*{\fill}\qedsymbol

\noindent{\it Proof of ($\mathcal C$2) for $\mathcal C_{l_\infty}(\mathbf A)$.} 
$\mathcal C_{l_\infty}(\mathbf A)$ is independent of phase factors. Thus, without loss of generality, we assume that Kraus operators of a SIO $\mathcal E$ are given as 
\begin{align}\label{KrausNoPhase}
K_\mu=\sum_{i=0}^{d-1} \sqrt{p_{\mu|i}}|\pi_\mu(i)\rangle\langle i|.
\end{align}
Then, applying the dual of $\mathcal E$ gives rise to 
$$\mathcal E^*(\mathbf A) = \left\{\sum_{\mu=0}^{n_{\mathcal E}-1}\sum_{i,j=0}^{d-1} \sqrt{p_{\mu|i}p_{\mu|j}} \langle \pi_\mu(i)|A_a|\pi_\mu(j)\rangle |i\rangle\langle j|  \right\}_{a=0}^{n-1}.$$
Its coherence monotone $\mathcal C$ becomes smaller as follows:
\begin{align*}
\mathcal C_{l_\infty}(\mathcal E^*(\mathbf A))&= \max_{i<j} \sum_{a=0}^{n-1} \left|\sum_{\mu=0}^{n_{\mathcal E}-1} \sqrt{p_{\mu|i}p_{\mu|j}} \langle \pi_\mu(i)|A_a|\pi_\mu(j)\rangle\right|\\
&\leq \max_{i<j} \sum_{\mu=0}^{n_{\mathcal E}-1} \sqrt{p_{\mu|i}p_{\mu|j}}  \sum_{a=0}^{n-1} \left|\langle \pi_\mu(i)|A_a|\pi_\mu(j)\rangle\right|\\
&\leq \mathcal C_{l_\infty}(\mathbf A) \max_{i<j} \sum_{\mu=0}^{n_{\mathcal E}-1} \sqrt{p_{\mu|i}p_{\mu|j}}\\
&\leq \mathcal C_{l_\infty}(\mathbf A).
\end{align*}
The first inequality is due to the subadditivity of the absolute value. We have the second inequality by the definition of $\mathcal C_{l_\infty}$. The third inequality comes out of the Cauchy–Schwarz inequality. 
\hspace*{\fill}\qedsymbol

\noindent{\it Proof of ($\mathcal C$3) for $\mathcal C_{l_\infty}(\mathbf A)$.} For the same reason as above, we again restrict ourselves without loss of generality to a selective SIO $\mathcal E$ described by Kraus operators in \eqref{KrausNoPhase}. Then applying it to $\mathbf A\in\mathcal M(d,n)$, we have the POVM $\mathbf A'=\{ K_\mu^\dagger A_a K_\mu \}\in \mathcal M(d,n\cdot n_{\mathcal E})$, where each POVM component is expanded as
$$K_\mu^\dagger A_a K_\mu= \sum_{i,j=0}^{d-1} \sqrt{p_{\mu|i}p_{\mu|j}} \langle \pi_\mu(i)|A_a|\pi_\mu(j)\rangle |i\rangle\langle j|.  $$
Its coherence monotone $\mathcal C_{l_\infty}$ is given as
\begin{align*}
\mathcal C_{l_\infty}(\mathbf A')&= \max_{i<j} \sum_{\mu=0}^{n_{\mathcal E}-1}\sum_{a=0}^{n-1} \left| \sqrt{p_{\mu|i}p_{\mu|j}} \langle \pi_\mu(i)|A_a|\pi_\mu(j)\rangle\right|.
\end{align*}
In the same way that we prove ($\mathcal C$2) for $\mathcal C_{l_\infty}$, the Cauchy-Schwarz inequality implies the monotonicity $\mathcal C_{l_\infty}(\mathbf A')\leq \mathcal C_{l_\infty}(\mathbf A).$
\hspace*{\fill}\qedsymbol

\noindent{\it Proof of ($\mathcal C$4) for $\mathcal C_{l_\infty}(\mathbf A)$.} 
We consider the convex combination of $\mathbf A, \mathbf B\in \mathcal M(d,n)$ that is the POVM $p \mathbf A+(1-p) \mathbf B=\{pA_a+(1-p)B_a\}_{a=0}^{n-1}$ with weight $p$. Then its coherence monotone $\mathcal C_{l_\infty}$ is given as
\begin{align*}
\mathcal C_{l_\infty}(p\mathbf A+(1-p)&\mathbf B) = \max_{i<j} \sum_{a=0}^{n-1} \left| \langle i|p A_a+(1-p)B_a|j\rangle\right|\\
\leq&  \max_{i<j} \sum_{a=0}^{n-1}\big(  p\left|\langle i|A_a|j\rangle|+(1-p)|\langle i|B_a|j\rangle\right|\big)\\
\leq& p \mathcal C_{l_\infty}(\mathbf A)+(1-p)\mathcal C_{l_\infty}(\mathbf B).
\end{align*}
Here, the first inequality comes from the subadditivity of the absolute value, and the second one by separately applying the maximization.
\hspace*{\fill}\qedsymbol

\section{A counter example of the monotonicity for $\mathcal C_{l_1}$}\label{Appdx4}

We show that $\mathcal C_{l_1}$ does not fulfill the monotonicity ($\mathcal C$2) by a counter example. Let us consider a POVM $\mathbf A=\{A_{\pm}\}\in \mathcal M(4,2)$, where POVM components are given by
\[A_\pm=\begin{pmatrix}
1/2 & \pm1/2 & 0 &0\\
\pm1/2 & 1/2 & 0 & 0\\
0 & 0& 1/2 & 0\\
0 & 0 & 0 &1/2
\end{pmatrix}.\]
Then, as we apply the dual of a SIO $\mathcal E$ described by the Kraus operators 
\[K_0=\begin{pmatrix}
1 & 0 & 0 &0\\
0 & 1 & 0 & 0\\
0 & 0& 0 & 0\\
0 & 0 & 0 &0
\end{pmatrix}, 
K_1=\begin{pmatrix}
0 & 0 & 1 &0\\
0 & 0 & 0 & 1\\
0 & 0& 0 & 0\\
0 & 0 & 0 &0
\end{pmatrix},\]
the POVM components become
\[\mathcal E^*(A_\pm)=\begin{pmatrix}
1/2 & \pm1/2 & 0 &0\\
\pm1/2 & 1/2 & 0 & 0\\
0 & 0& 1/2 & \pm 1/2\\
0 & 0 & \pm 1/2 &1/2
\end{pmatrix},\]
respectively. Consequently, one can easily identify that $\mathcal C_{l_1}(\mathbf A)=2\leq \mathcal C_{l_1}(\mathcal E^*(\mathbf A))=4$.

\section{Proof of Theorem 1}\label{Prf of Thm1}

To prove Theorem 1, we need the following lemma. 

\begin{lemma}
For any state $\rho$ on $\mathcal H_d$, its off-diagonal elements satisfy
\begin{align}\label{lemma1}
|\langle i|\rho|j\rangle|\leq \frac{1}{2} \text{ for } i\neq j.
\end{align}
\end{lemma}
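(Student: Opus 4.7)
\medskip

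\noindent\textbf{Proof plan for Lemma 2.} The plan is to reduce the global inequality to a statement about a $2\times 2$ principal submatrix and then combine positivity with the unit trace condition. Fix indices $i\neq j$ and consider the $2\times 2$ block
\begin{equation*}
\sigma_{ij}=\begin{pmatrix}\langle i|\rho|i\rangle & \langle i|\rho|j\rangle\\ \langle j|\rho|i\rangle & \langle j|\rho|j\rangle\end{pmatrix}.
\end{equation*}
Since $\rho\succeq 0$, any principal submatrix of $\rho$ is also positive semidefinite, so $\sigma_{ij}\succeq 0$.

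The first step is to use $\det(\sigma_{ij})\geq 0$, which immediately yields
\begin{equation*}
|\langle i|\rho|j\rangle|^{2}\leq \langle i|\rho|i\rangle\,\langle j|\rho|j\rangle.
\end{equation*}
The second step is to bound the right-hand side. Because $\rho$ is a density operator, $\Tr[\rho]=1$ and each diagonal entry is nonnegative, so $\langle i|\rho|i\rangle+\langle j|\rho|j\rangle\leq 1$. The AM--GM inequality then gives
\begin{equation*}
\sqrt{\langle i|\rho|i\rangle\,\langle j|\rho|j\rangle}\leq \frac{\langle i|\rho|i\rangle+\langle j|\rho|j\rangle}{2}\leq \frac{1}{2}.
\end{equation*}
Combining the two steps produces $|\langle i|\rho|j\rangle|\leq 1/2$, as claimed.

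There is no real obstacle here: the only subtlety is to remember that positivity of the full matrix $\rho$ implies positivity of every principal submatrix (so one cannot simply manipulate the two specific entries without the diagonal information). Everything else is just the determinant inequality together with AM--GM and the trace constraint. Equality is attained, for instance, by $\rho=\tfrac12(|i\rangle+|j\rangle)(\langle i|+\langle j|)$, which shows the bound $1/2$ is tight and hence cannot be improved for use in the subsequent proof of Theorem~1.
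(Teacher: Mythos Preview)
Your proof is correct and follows essentially the same route as the paper: extract the $2\times 2$ principal submatrix, use positive semidefiniteness to get $|\rho_{ij}|\le\sqrt{\rho_{ii}\rho_{jj}}$, and then bound the geometric mean by $(\rho_{ii}+\rho_{jj})/2\le 1/2$ via the trace condition. The only cosmetic difference is that the paper labels the last step ``Cauchy--Schwarz'' rather than AM--GM.
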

\begin{proof}
A state $\rho$ is positive semidefinite if and only if the determinant of every submatrix indexed by the same rows and columns of $\rho$, i.e., every principal submatrix of $\rho$, is nonnegative \cite{Zhang2011}. Thus, the determinant of any $2\times 2$ submatrix $\rho(i,j)$ of $\rho$ given by 
\[\rho(i,j)=\begin{pmatrix}
\rho_{ii}& \rho_{ij}\\
\rho_{ji} & \rho_{jj}
\end{pmatrix},\]
must be non-negative, namely $\sqrt{\rho_{ii}\rho_{jj}}\geq|\rho_{ji}|$, where $\rho_{ij}\equiv\langle i |\rho|j\rangle$. The Cauchy-Schwarz inequality with $\rho_{ii}+\rho_{jj}\leq 1$ and $\rho_{ii}\geq0$ for any $i,j$ then implies $|\rho_{ji}|\leq 1/2$.
\end{proof}

\noindent{\it Proof of Theorem 1.}

{
The robustness is efficiently cast by using semidefinite programming(SDP), and particularly in our case its dual problem is written as the following SDP \cite{Oszmaniec2019}.
\begin{align}\label{RobSDP}
 maximize &\;\;\;\;\; \sum_{a=0}^{n-1} \Tr[Z_a A_a ]-1 \\
 subject\;to &\;\;\;\;\; \forall i, a\;\;\; Z_a\geq 0, \;\; \langle i|Z_a|i\rangle=\langle i|Z_n|i\rangle \label{RobSDP1} \\
 &\;\;\;\;\; \Tr[Z_n]=1. \label{RobSDP2}
\end{align}
}

To prove the lower bound, let us rewrite the maximized function \eqref{RobSDP} as
\begin{align}\label{PrfTh1}
 &\sum_{a=0}^{n-1} \Tr[Z_a A_a]-1=  \sum_{a=0}^{n-1} \sum_{i,j=0}^{d-1} \langle i |Z_a|j\rangle \langle j | A_a|i\rangle -1\nonumber\\
&= \sum_{i=0}^{d-1} \sum_{a=0}^{n-1}  \langle i |Z_a|i\rangle \langle i | A_a|i\rangle+ \sum_{i\neq j}^{d-1}\sum_{a=0}^{n-1} \langle i |Z_a|j\rangle \langle j | A_a|i\rangle -1\nonumber\\
&=\sum_{i\neq j}^{d-1} \sum_{a=0}^{n-1} \langle i |Z_a|j\rangle \langle j | A_a|i\rangle.
\end{align}
Here, we have the third equality by using the completeness of $\mathbf A$ with the fact that $\langle i|Z_a|i\rangle$ is independent of $a$ as specified in Eq. \eqref{RobSDP1}. 

Now, consider a set of states $\{|k,l,\theta_{a}\rangle\}_{a=0}^{n-1}$ given by 
\begin{align}\label{PrfTh1_2}
|k,l,\theta_a\rangle= \frac{1}{\sqrt{2}} (|k\rangle+e^{i\theta_a}|l\rangle),
\end{align}
which defines $Z_a\equiv|k,l,\theta_{a}\rangle\langle k,l,\theta_{a}|$ to be used in Eq. \eqref{PrfTh1}.
The last term in Eq. \eqref{PrfTh1} is then given by 
\begin{align}\label{E4}
\sum_{i\neq j}^{d-1} \sum_{a=0}^{n-1} \langle i |Z_a|j\rangle \langle j | A_a|i\rangle=\sum_{a=0}^{n-1} {Re}\left[e^{i\theta_a} \langle l |A_a| k \rangle\right] .
\end{align}
Let $\mathcal C_{l_\infty}(\mathbf A)=\max_{i,j}\sum_{a=0}^{n-1} | \langle i |A_a| j \rangle |$ be achieved by $\{i,j\}=\{k,l\}$. Setting  $\theta_a\equiv-\arg[\langle l |A_a| k \rangle]$, we see that \eqref{E4} becomes $\mathcal C_{l_\infty}(\mathbf A)$. 
This proves $\mathcal C_{l_\infty}(\mathbf A)\leq \mathcal R_C(\mathbf A)$, as $\mathcal C_{l_\infty}(\mathbf A)$ can always be achieved by choosing a certain set of states $Z_a$ in \eqref{PrfTh1_2}.

It is straightforward to prove the upper bound, as we apply the result of lemma 1 to \eqref{PrfTh1} as 
\begin{align*}
&\sum_{i\neq j}^{d-1} \sum_{a=0}^{n-1} \langle i |Z_a|j\rangle \langle j | A_a|i\rangle\\
&\leq \sum_{i\neq j}^{d-1} \sum_{a=0}^{n-1} |\langle i |Z_a|j\rangle| |\langle j | A_a|i\rangle|\\
&\leq \frac{1}{2}\sum_{i\neq j}^{d-1} \sum_{a=0}^{n-1} |\langle j | A_a|i\rangle|=\frac{1}{2} \mathcal C_{l_1}(\mathbf A).
\end{align*}
This completes the proof. \hspace*{\fill} \qedsymbol

{
\section{A POVM $\mathbf M \in \mathcal M(3,2)$ and amplitude damping channel}\label{Ex}

To see relationships among $\mathcal R_{\mathcal C}$, $\mathcal C_{l_\infty}$ and $\mathcal C_{l_1}$ for higher dimensions, we consider a dichotomic POVM $\mathbf G=\{G_0, G_1=I_3-G_0\}$ as an example in a 3-dimensional system, where the POVM component is defined as 
\[G_0=\begin{pmatrix}
0.528 & 0.263 & 0.042\\
0.263 & 0.137 & 0.026\\
0.042 & 0.026 & 0.008
\end{pmatrix}.\]
We assume that the amplitude damping channel is applied to the POVM $\mathbf M$, of which Kraus operators are given as
\begin{align*}
K_{\mu}^{AD}= \sum_{i=\mu}^{2} \sqrt{i \choose \mu} \sqrt{(1-\gamma)^{i-\mu}\gamma^{\mu}} |i-\mu\rangle\langle i|
\end{align*}
for $\mu=0,1,2$, where $\gamma$ is the amplitude damping rate. One can identify that the amplitude damping channel is a SIO as the Kraus operators coincide with the form of Kraus operators describing a SIO \eqref{SIO}.
In figure \ref{Comparing}, we plot $\mathcal R_{\mathcal C}$, $\mathcal C_{l_\infty}$ and $\mathcal C_{l_1}$ of $\mathbf G'=\{ \sum_{\mu=0}^2 K_\mu^{AD\dagger} G_g K_\mu^{AD} \}_{g=0,1}$ versus the damping rate $\gamma$.
}

\section{Experimental data from the IBM Q processor}\label{App_Exp}

\begin{figure}[t]
  \centering
  \includegraphics[width =7cm]{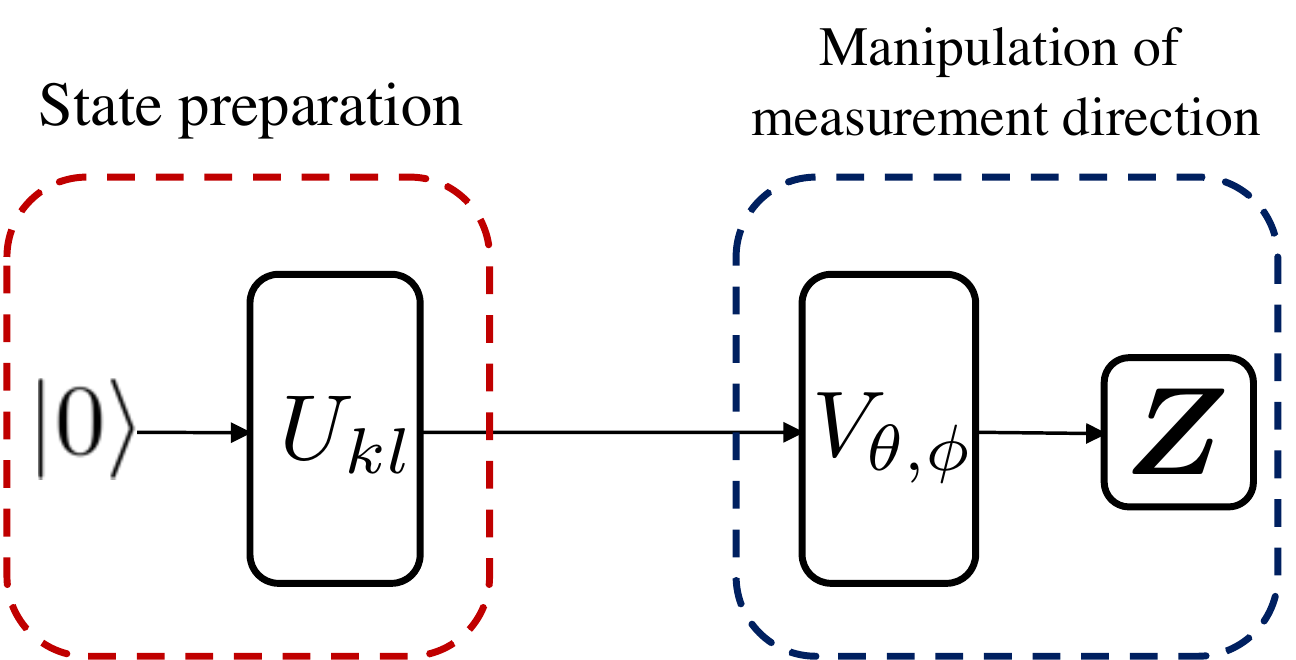}
  \caption{State preparation and manipulation of measurement direction on IBM Q processor.}\label{Scheme}
\end{figure}

\begin{figure*}[t]
  \centering
  \includegraphics[width =4cm]{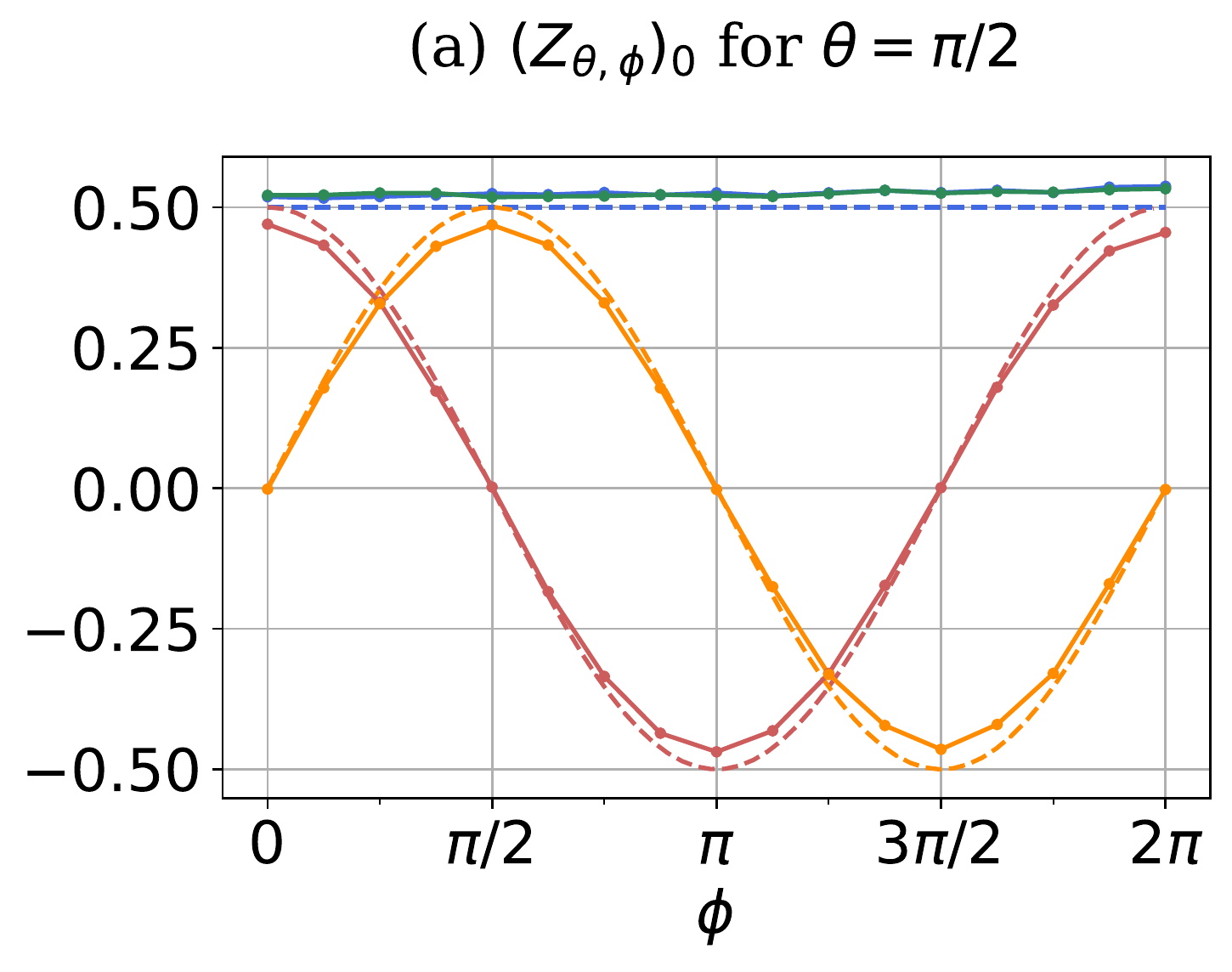}
  \includegraphics[width =4cm]{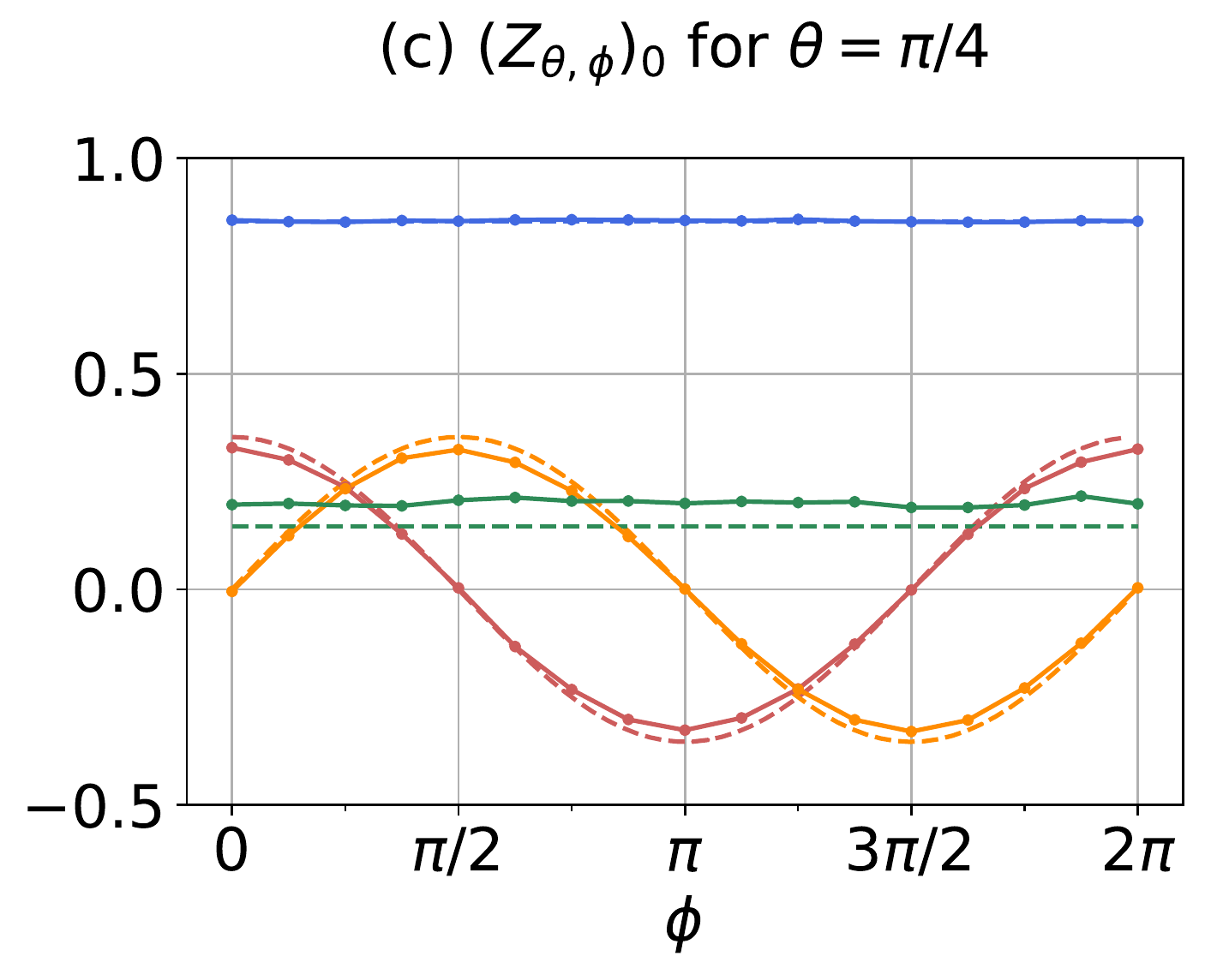}
  \includegraphics[width =6.3cm]{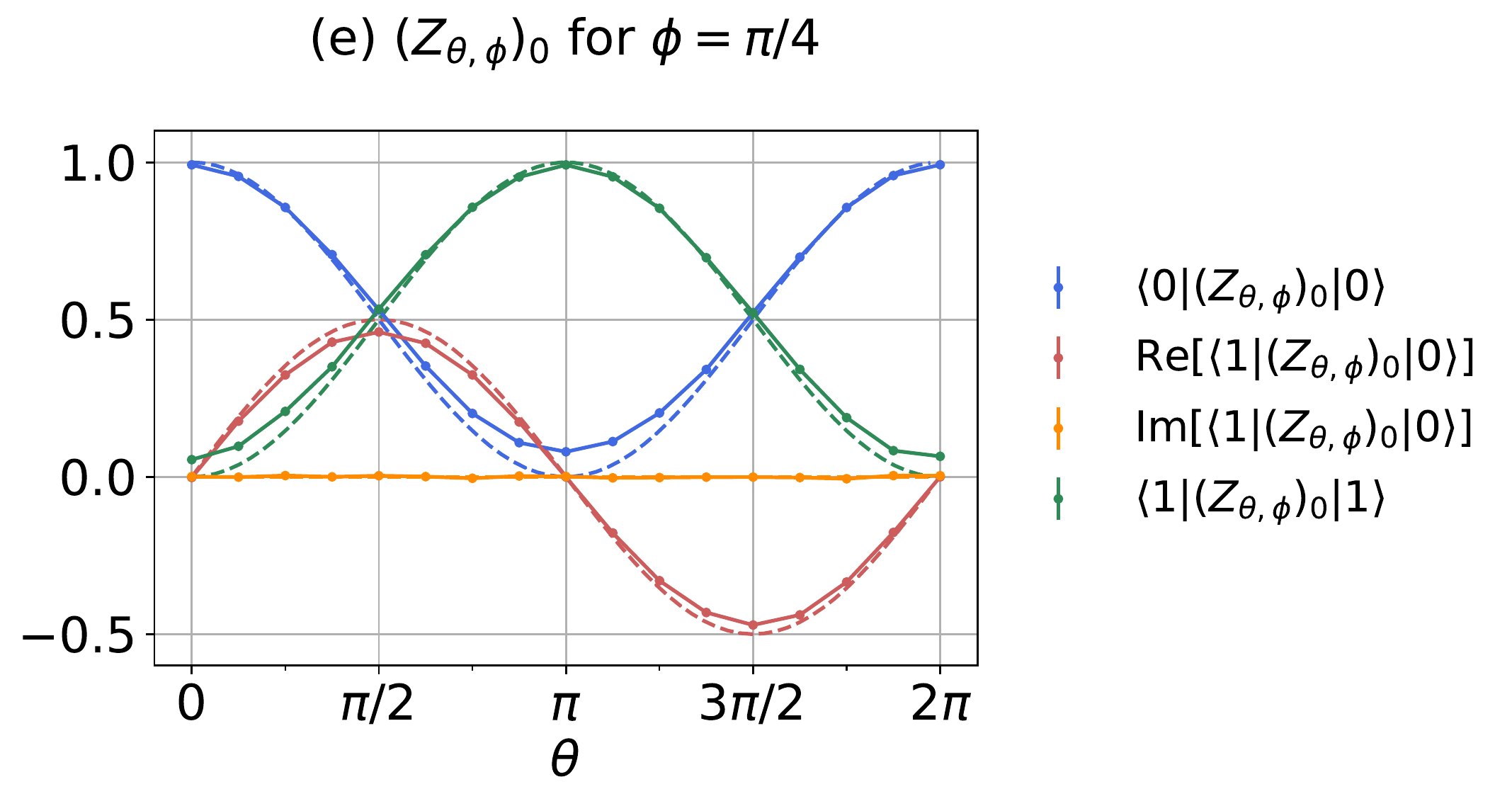}
  
    \includegraphics[width =4cm]{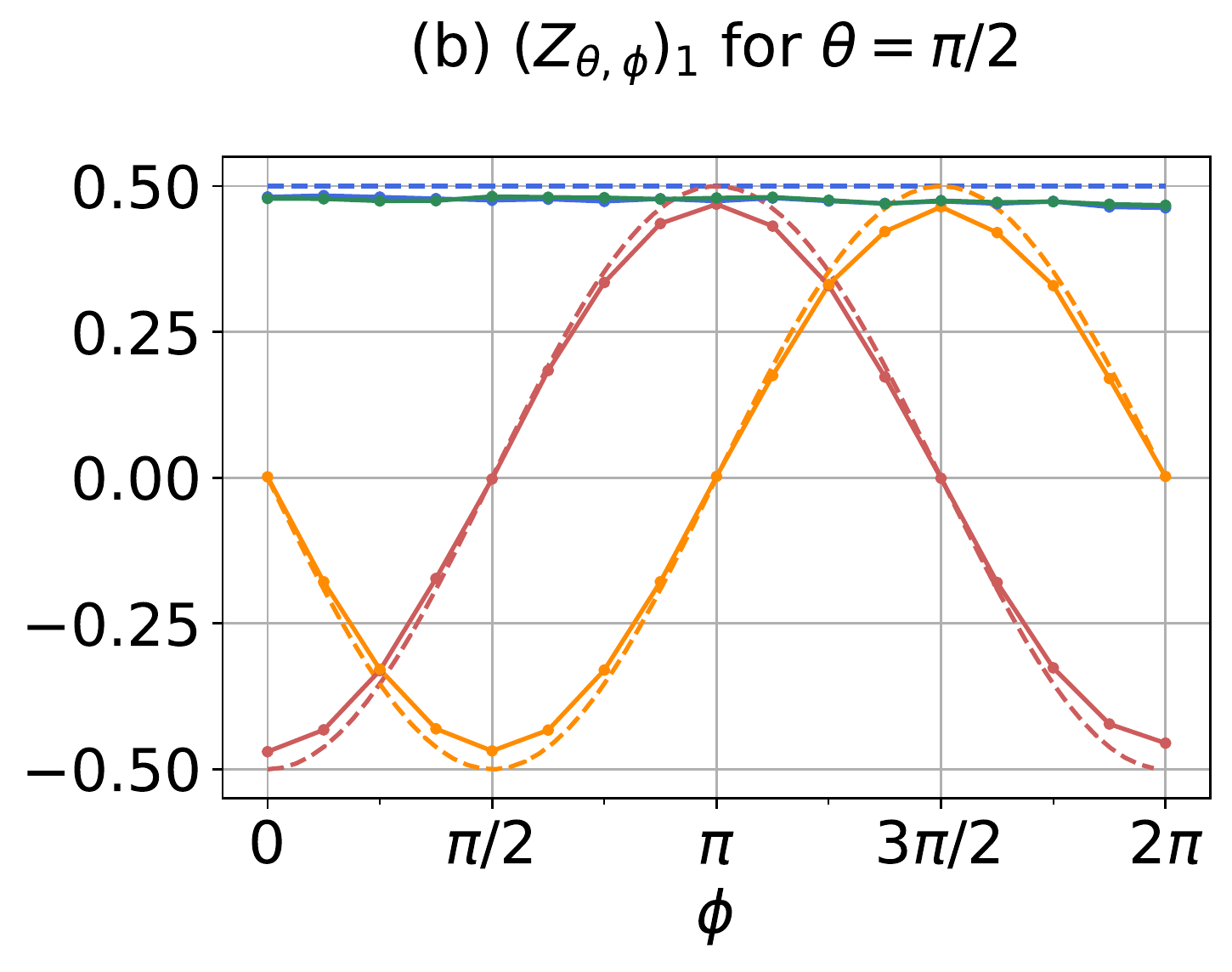}
  \includegraphics[width =4cm]{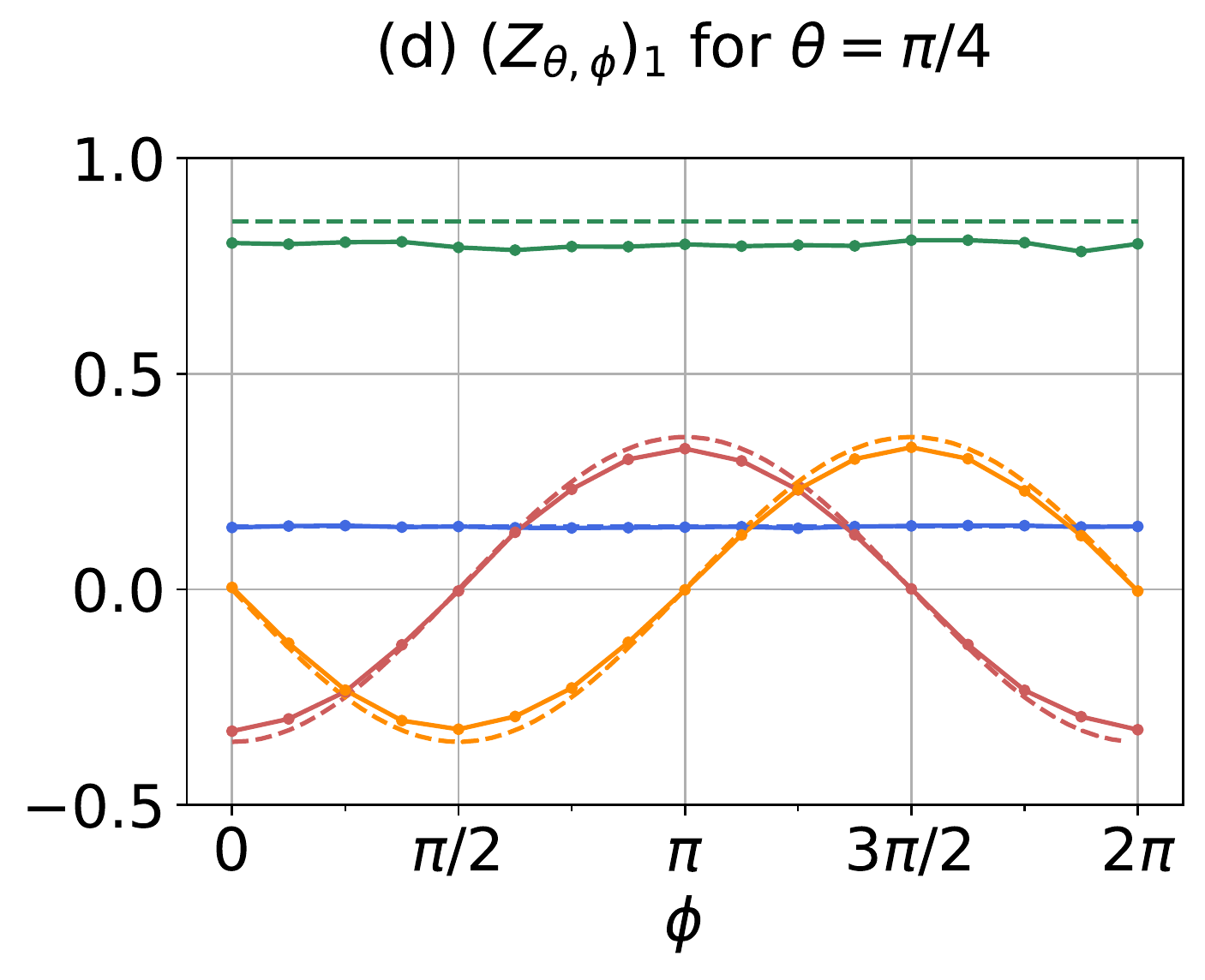}
  \includegraphics[width =6.3cm]{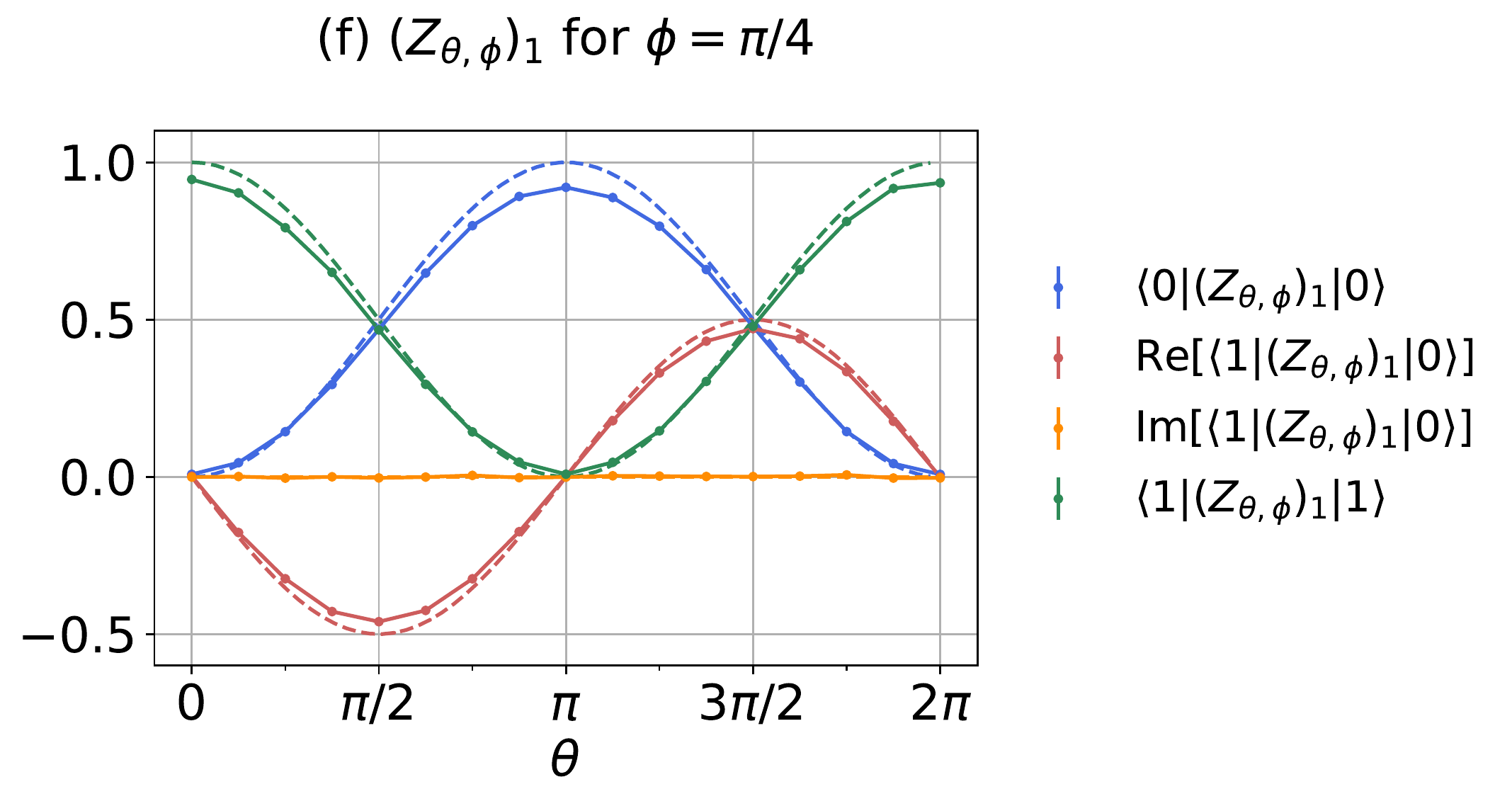}
  \caption{Matrix elements of $(Z_{\theta,\phi})_i=V_{\theta,\phi}^\dagger |i\rangle\langle i|V_{\theta,\phi}$ for $i=0,1$.}\label{Graphs}
\end{figure*}

In the IBM Q processor, the initial state is prepared in $|0\rangle$. Applying single-qubit gates $U_{00}=I$, $U_{01}=H$, $U_{10}=PH$ and $U_{11}=X$ to the initial state, we prepare the family of states $\{|\psi_{kl}\rangle=U_{kl}|0\rangle\}_{k,l=0,1}$, where  $H=(1/2)(|0\rangle\langle0|+|0\rangle\langle1|+|1\rangle\langle0|-|1\rangle\langle1|)$, $P=|0\rangle\langle0|+i|1\rangle\langle1|$ and $X=|0\rangle\langle1|+|1\rangle\langle0|$ are the Hadamard, the phase and the Pauli-$X$ gate, respectively. For each case, we apply the single-qubit gate $V_{\theta,\phi}$ before we execute the measurement $\bm Z$, as illustrated in figure \ref{Scheme}. Namely, $U_{kl}$ and $V_{\theta,\phi}$ are applied sequentially, 
with $U_{kl}$ used to prepare $|\psi_{kl}\rangle$ and $V_{\theta,\phi}$ to manipulate the measurement direction.

For a pair of angles $\theta$ and $\phi$, we obtain experimental data for each $k,l$ from the IBM Q processor with the setting of 8192 shots, and calculate the full description of measurement $\bm Z_{\theta,\phi}$ from the data. We repeat this procedure 10 times. In figure \ref{Graphs}, we plot the average values of matrix elements of POVM components $(Z_{\theta,\phi})_i=V_{\theta,\phi}^\dagger |i\rangle\langle i|V_{\theta,\phi}$ for $i=0,1$. From these values, we assess coherence of measurements in figure \ref{Figure1} of main text. We also note that the size of standard deviations is less than $10^{-3}$.

\section*{References}

\bibliographystyle{unsrt.bst}
\bibliography{mybibfile}

\end{document}